\pgfplotsset{compat=1.17}
  \providecommand\BibTeX{{%
    \normalfont B\kern-0.5em{\scshape i\kern-0.25em b}\kern-0.8em\TeX}}}
\begin{document}

\title[qIoV: A Quantum-Driven Internet-of-Vehicles-Based Approach]{qIoV: A Quantum-Driven Internet-of-Vehicles-Based Approach for Environmental Monitoring and Rapid Response Systems}

\author{Ankur Nahar}
\email{nahar.1@iitj.ac.in}
\orcid{0000-0001-9996-1167}
\affiliation{%
  \institution{Indian Institute of Technology Jodhpur}
  \streetaddress{Department of Computer Science and Engineering}
  \city{Jodhpur}
  \state{Rajasthan}
  \country{India}
  \postcode{342030}
}

\author{Koustav Kumar Mondal}
\email{mondal.4@iitj.ac.in}
\orcid{0000-0003-1679-1030}
\affiliation{%
  \institution{Indian Institute of Technology Jodhpur}
  \streetaddress{Inter-disciplinary Research Division - IoT and Application}
  \city{Jodhpur}
  \state{Rajasthan}
  \country{India}
  \postcode{342030}
}

\author{Debasis Das}
\email{debasis@iitj.ac.in}
\orcid{0000-0001-6205-4096}
\affiliation{%
  \institution{Indian Institute of Technology Jodhpur}
  \streetaddress{Department of Computer Science and Engineering}
  \city{Jodhpur}
  \state{Rajasthan}
  \country{India}
  \postcode{342030}
}

\author{Rajkumar Buyya}
\email{rbuyya@unimelb.edu.au}
\orcid{0000-0001-9754-6496}
\affiliation{%
  \institution{University of Melbourne}
  \streetaddress{Computing and Information Systems}
  \city{Melbourn}
  \state{VIC}
  \country{Australia}
  \postcode{3052}
}

\renewcommand{\shortauthors}{Nahar, et al.}

\begin{abstract}
This research addresses the critical necessity for advanced rapid response operations in managing a spectrum of environmental hazards. These include leaks in urban pipelines, industrial release of hazardous gases, methane emissions from landfill sites, chlorine gas leaks from water treatment facilities, and carbon monoxide emissions from residential areas, all of which are critical to ensuring public safety and environmental protection. We propose a novel framework, \textit{qIoV} that integrates quantum computing with the Internet-of-Vehicles (IoV) to leverage the computational efficiency, parallelism, and entanglement properties of quantum mechanics. Our approach involves the use of environmental sensors mounted on vehicles for precise air quality assessment. These sensors are designed to be highly sensitive and accurate, leveraging the principles of quantum mechanics to detect and measure environmental parameters. A salient feature of our proposal is the Quantum Mesh Network Fabric (QMF), a system designed to dynamically adjust the quantum network topology in accordance with vehicular movements. This capability is critical to maintaining the integrity of quantum states against environmental and vehicular disturbances, thereby ensuring reliable data transmission and processing. Moreover, our methodology is further augmented by the incorporation of a variational quantum classifier (VQC) with advanced quantum entanglement techniques. This integration offers a significant reduction in latency for hazard alert transmission, thus enabling expedited communication of crucial data to emergency response teams and the public. Our study on the IBM OpenQSAM 3 platform, utilizing a 127 Qubit system, revealed significant advancements in pair plot analysis, achieving over 90\% in precision, recall, and F1-Score metrics and an 83\% increase in the speed of toxic gas detection compared to conventional methods.Additionally, theoretical analyses validate the efficiency of quantum rotation, teleportation protocols, and the fidelity of quantum entanglement, further underscoring the potential of quantum computing in enhancing analytical performance.
\end{abstract}

\keywords{Quantum computing, Internet-of-Vehicles, entanglement distribution, environmental hazards}

\maketitle

\section{Introduction}
Gas leaks in urban areas often lead to highly dangerous situations and can potentially cause severe harm. The absence of a rapid response system has been a contributing factor in numerous incidents, such as the over 2,600 gas pipeline leaks in the United States from 2010 to late 2021 \cite{phmsa2023pipeline}, and 3,695 fatalities in China's coal mining industry from 2001 to 2018 \cite{zhou2022variance}. Additionally, from 2010 to 2022, there were 4,901 oil and gas spills in the U.S., with Texas alone accounting for nearly 40\% of these incidents \cite{visualcapitalist2023spills}. The Bhopal gas leak tragedy in India \cite{mishra2009bhopal}, resulting in thousands of deaths, underscores the need for efficient, real-time alert systems. Moreover, the U.S. pipeline regulator's recent measures to curb methane leaks from the country's vast gas infrastructure highlight the ongoing challenges and urgency in addressing such hazardous leaks \cite{reuters2023methaneleaks}.

In such scenarios, early warning systems are essential in reducing environmental disaster impacts \cite{balis2013development}. They enable fast evacuations, timely medical aid, and quick emergency service mobilization, which is particularly critical in gas leak scenarios. Classical computing methods in Hazardous Incident Tracking, such as wireless sensor networks \cite{dong2019gas} and machine learning algorithms \cite{liu2021predictive}, have made significant contributions in detecting and localizing gas leaks. However, they encounter limitations in real-time data processing, scalability, and sensitivity to subtle environmental changes. These limitations underscore the need for a more accurate and timely detection of hazardous incidents. Here, the quantum computing model can provide real-time, high-fidelity analysis of environmental data, enabling more accurate and timely detection of hazardous incidents compared to existing classical methods. The transition to quantum computing not only promises a significant speedup in processing times but also introduces a new paradigm in environmental monitoring, where the ability to detect minute changes in sensor data can lead to earlier warnings and more effective mitigation of hazards. Quantum computing application within vehicular networks presents a novel front with potential for enhancing safety and operational efficiency. This research points at the intersection of quantum computing and vehicular technology, aiming to leverage the computational capabilities of quantum systems to address critical challenges in vehicular sensor data processing and communication. Central to our investigation are three interconnected challenges: the translation of classical sensor data into quantum formats, the utilization of quantum classifiers for the detection of environmental hazards, and the deployment of quantum entanglement mechanisms for the dissemination of alerts within a vehicular network. By addressing these challenges, this study aims to contribute to the development of quantum computing applications that can surpass the limitations of classical computing paradigms. In this context, vehicles, with their omnipresence in urban areas, emerge as innovative carriers for environmental sensors (e.g., MQ2, MQ3, MQ5, MQ6, MQ7, MQ8, and MQ135), transforming them into mobile monitoring units. These sensors gather real-time data, which is crucial for hazard detection. Nevertheless, challenges arise due to this data's sheer volume and complexity, making traditional processing methods insufficient \cite{pramanik2022optimization}. Our framework in Fig. \ref{intr} presents a sustainable approach to environmental safety, leveraging a quantum classifier, quantum mesh fabric, and quantum server for enhanced detection and communication of hazardous gas leaks. The process begins with Step 1, where vehicles equipped with advanced sensors detect toxic gases emanating from sources such as industrial sites or urban pipelines. These sensors gather high-resolution environmental data crucial for the early detection of hazardous gas leaks. In Step 2, this sensor data is converted into a quantum-compatible format. This quantum data conversion is a critical process that translates classical sensor readings into quantum states that can be processed by quantum algorithms. Step 3 involves the core quantum computing components of the system—Quantum Mesh Fabric and Quantum Server. The Quantum Mesh Fabric is an adaptive network that manages the entanglement of quantum states across the vehicular network, ensuring robust and secure communication channels. The Quantum Server, on the other hand, is tasked with processing the quantum data using quantum algorithms, notably the variational quantum classifier, which identifies and classifies the detected gases based on their quantum state representations. Finally, step 4 is where the processed data culminates into actionable intelligence. The Quantum Server, utilizing the quantum entanglement established by the Quantum Mesh Fabric, disseminates hazard alerts across the network. These alerts are transmitted with minimal latency using the principles of quantum entanglement, which allow for the instantaneous relay of information. 

\begin{figure}[ht]
    \centering
    \includegraphics[width=0.9\columnwidth]{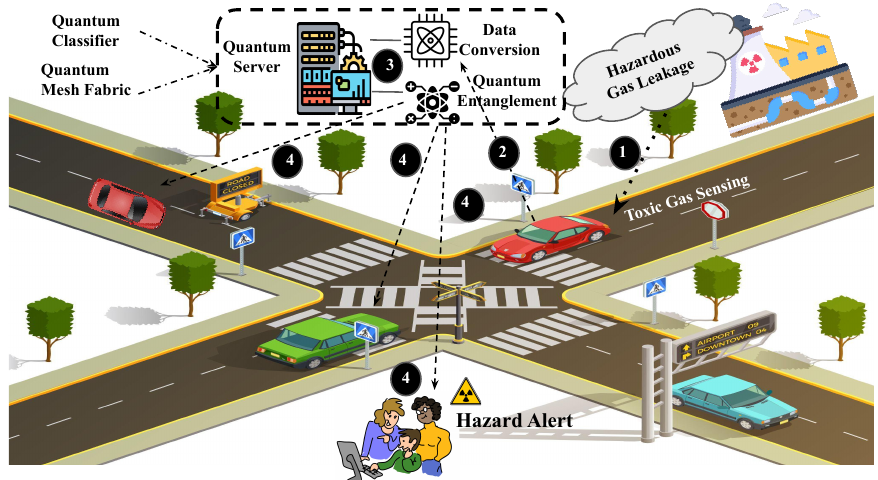}
    \caption{qIoV: A Quantum Framework for Toxic Gas Detection}
    \label{intr}
\end{figure}

\subsection{Motivation}
Serial incidents of disastrous environmental emergencies and recurring toxic gas breakouts in the world motivate our research. A report in \cite{india2006primer} claims that most tragedies occurring at nocturnal timing laid bare a critical vulnerability. The existing systems fail to effectively detect, predict, and warn against nocturnal toxic hazards due to limited sensor sensitivity and reduced human monitoring during night hours. For example, the Bhopal gas tragedy in India, a night-time industrial failure resulting in immediate and widespread loss of life \cite{palazzi2015critical}. Similarly, the Philadelphia refinery explosion in 2019 \cite{auchincloss2019re}, and China’s toxic gas breakout incidents echoed this necessity, emphasizing the elevated risks at a time of natural human vulnerability. It underscored the need for a vigilant, around-the-clock monitoring system. One promising solution could be public and private vehicles as emergency monitoring units. 

\textbf{\textit{However,  why vehicles, one might ask?}} The selection of vehicles as the primary medium for this research is underpinned by their ubiquity and inherent mobility. Given their extensive coverage and constant presence in urban areas, vehicles offer a unique platform for deploying environmental sensing capabilities. This ubiquity positions them as ideal candidates for monitoring a range of environmental parameters, including toxic gas concentrations, on a scale and with a granularity that stationary sensors cannot match \cite{cooper2021exploring}. However, the effective realization of this solution is contingent upon overcoming significant computational challenges. Traditional computing infrastructures, while capable in many respects, fall short in terms of processing speed and sensitivity, particularly when rapid data analysis is crucial—for instance, during nocturnal hours when the absence of human observers necessitates improved sensor vigilance. 

\textbf{\textit{How technology will help?}} Quantum computing introduces data processing capabilities, primarily through its exploitation of quantum bits (qubits). These qubits can exist in multiple states simultaneously (superposition) and be entangled with other qubits, creating a complex web of probabilistic relationships that can be manipulated collectively. This quantum superposition and entanglement enable the parallel processing of information, allowing quantum systems to analyze vast datasets and execute algorithms at speeds unattainable by traditional computing systems. In the context of vehicular networks, this quantum advantage translates into the ability to rapidly encode, process, and interpret the environmental data collected by vehicular sensors. Through quantum state encoding, sensor data can be transformed into quantum states, which are then manipulated using quantum circuits that execute sophisticated algorithms for data analysis \cite{danba2022toward}. 

\subsection{Contributions}

This research proposes a framework called \textit{qIoV} for environmental monitoring (Ref. Fig. \ref{intr}), particularly in the detection and analysis of toxic gasses. Our framework utilizes multipartite entanglement that refers to a quantum state involving three or more qubits where the state of each qubit cannot be described independently of the state of the others, even when the qubits are separated by large distances. Using these entangled states, qIoV framework encode environmental data, including hazard alerts, into a quantum state shared across multiple vehicles within the network. The transmission of alerts is further refined through the application of quantum teleportation technique. For vehicular networks, this implies that once an alert is encoded into a quantum state, it can be instantaneously transmitted across the network, from the detecting vehicle to others, utilizing entanglement patterns designed for vehicular data. To achieve this, we design quantum circuits that are capable of generating and manipulating multipartite entangled states, alongside protocols for entanglement distribution and quantum teleportation. The circuits include operations i.e., Bell state measurements and controlled quantum gates, which facilitate the reliable encoding and teleportation of quantum states corresponding to environmental alerts. The adaptability of these entangled states to the topology of vehicular networks ensures that alerts are propagated efficiently, reaching all intended recipients with minimal delay and maximal fidelity. To summarize, followings are the key contributions of our research:
\begin{enumerate}
    \item We propose an encoding scheme that aligns the multi-dimensional sensor data from vehicular sensors with quantum states. Our framework, specifically calibrated for detecting various toxic gases, relies on optimizing the rotation angles $\mathbf{R_\theta}$ based on the unique patterns of gas sensor readings. 
    \item  We propose multipartite entanglement in vehicular networks based on Quantum Entangled Feature Maps. The quantum operations systematically entangle qubits across different vehicles, utilizing controlled-Z (CZ) and controlled-X (CX) gates to induce correlations that mirror the interconnected data patterns of vehicular sensors.
    \item To accommodate the dynamic and potentially large-scale nature of vehicular networks, we structure the entanglement in layers. Each layer represents a level of interaction among the vehicles. This allows for scalable and flexible entanglement patterns that can be adapted based on the network's size and complexity.
    \item We design Adaptive Quantum Circuits to integrate entanglement patterns to vehicular data. This methodology involves the dynamic adjustment of quantum circuits to optimize how environmental alerts are encoded into the quantum states, ensuring efficient and accurate data transmission.
    \item Our framework applies real-time adaptive quantum circuit optimization, guided by error syndromes and decoherence patterns identified through stabilizer codes, to precisely adjust quantum gate parameters, ensuring robust error correction and maintaining high fidelity in quantum state transmissions across vehicular networks.
\end{enumerate}

This article is organized as follows: Section 2 discusses the literature on quantum computing in computation, Section 3 presents the problem definition, and Section 4 defines the system overview and architectural design. Section 5 proposes the \textit{qIoV} framework. Finally, Section 6 presents the simulation findings and testing, and we conclude this paper in Section 7.

\section{Related Work}

This section discusses the evolving landscape of quantum computing, highlighting related works that contribute to our understanding of quantum machine learning, its applications in transportation, and emerging quantum frameworks and protocols. It also presents insights into quantum technology's current state, challenges, and innovations to set a foundational context for our research.

\subsection{Quantum Computing Foundations and Data Networks}
The convergence of machine learning (ML) algorithms and quantum computing has been identified as a transformative approach for processing and analyzing extensive datasets with high-dimensional complexity. Research in \cite{ramezani2020machine} examined this integration, underscoring the paradigm-shifting potential of qubits to enhance computational efficiency and data analysis capabilities. Furthermore, the authored discussed the development of hybrid quantum-classical systems, designed to exploit quantum computational advantages while addressing prevalent challenges such as the scarcity of qubits and the omnipresent issues of noise within quantum systems. Notably, the authors acknowledged the scalability constraints and noise inherent to quantum computing frameworks. Subsequently, in \cite{yang2023survey}, authors illuminated significant progress in the domain of quantum hardware, which now supports configurations scalable to thousands of qubits, thereby facilitating an era of unprecedented computational speed and efficiency. Despite these technological advancements, the authors identified enduring challenges, including the instability of quantum systems, resource limitations, communication barriers, and vulnerabilities in security protocols. Quantum machine learning and its intricacies are explored in \cite{houssein2022machine,danba2022toward}. The authors discussed the state of the field amid the limitations of quantum hardware and algorithm complexity. 

The authors in \cite{fan2022hybrid} introduced a framework that integrated quantum and classical computing methodologies to surmount intricate network resource optimization challenges. The authors applied quantum annealing to solve pure integer programming problems, while ancillary challenges are addressed through classical computing techniques. However, the approach faced substantial challenges in tackling continuous optimization problems. Further, in \cite{coccia2022evolution}, the authors revealed a strategic shift from a predominant focus on hardware development towards a more balanced emphasis on software innovation. This transition underscores an evolving landscape in quantum computing, marked by accelerated advancements in quantum image processing, machine learning, and sensory technologies. Addressing the complexities in quantum networking, \cite{chang2022order} focuses on the intricacies of entanglement swapping and the optimization challenges in quantum repeater chains due to their non-associative nature. The exploration of quantum data networks (QDNs) in \cite{qiao2022quantum} presented a distributed computing model that aimed to circumvent the scalability limitations of contemporary quantum computers. By facilitating quantum state information exchange among geographically dispersed devices, this model aspires to aggregate the computational power of multiple quantum-enabled devices into a singular, more potent quantum computing entity. Nevertheless, the implementation of QDNs encounters significant challenges, including the no-cloning theorem, quantum channel loss, the stochastic nature of entanglement link establishment, rapid decoherence of quantum states, and inefficiencies in single-photon detection. The analytical research into transport layer protocols for quantum data networks, discussed in \cite{zhao2023distributed}, assessed their potential to enable the reliable and efficient exchange of quantum data bits (qubits) across a network of quantum computers. The authors highlighted the critical challenge of managing quantum memory, a scarce resource in quantum computing, necessitating sophisticated mechanisms for its dynamic allocation and reallocation. A co-design framework named QuantumFlow, in \cite{jiang2021co}, introduced an innovative method of representing data as unitary matrices. This enables the encoding of inputs into a reduced number of qubits, substantially lowering computational cost complexity compared to classical computing. However, the practical application and scalability of the QuantumFlow framework are contingent upon the current limitations of quantum hardware, including qubit availability and error rates. Lastly, the authors in \cite{ferrari2023modular} addressed the execution challenges of quantum algorithms that surpass the qubit capacities of existing quantum processors. By advocating for the utilization of the quantum Internet to distribute computational tasks across multiple quantum processing units (QPUs), this approach aims to overcome scalability challenges. The efficacy of this framework, however, is intrinsically linked to the development and availability of a robust quantum Internet infrastructure, highlighting a pivotal area for future research and development in the field of quantum computing.

\subsection{Emerging Quantum Frameworks and Protocols}
Shifting to sector-specific applications, \cite{cooper2021exploring} investigated the potential of quantum computing in transportation modeling. However, it confronted the challenge of translating classical algorithms into quantum paradigms due to quantum operations' unitary and reversible nature. In \cite{zhang2021quantified}, authors presented a strategy for optimizing edge server placements, using binary and quantum encoding, but facing challenges due to its computational intensity. In \cite{azad2022solving}, authors applied quantum approximate optimization algorithms to the vehicle routing problem, illustrating the challenges posed by the emerging stage of quantum devices. Innovative approaches and protocols were highlighted in \cite{narottama2023federated} and \cite{ren2022nft}. The authors proposed novel frameworks for wireless communication and IoV, integrating federated learning with quantum teleportation and combining non-fungible tokens with quantum algorithms. However, both the schemes faced practical challenges of quantum neural network complexity and IoV scalability. In \cite{wang2022asynchronous}, authors introduced an asynchronous entanglement distribution protocol for quantum networks. However, the asynchronous nature requires sophisticated coordination among various network nodes to avoid entanglement swapping conflicts, making the protocol more complex than synchronous alternatives. The exploration continues with \cite{pramanik2022optimization} and \cite{leonidas2023qubit}, discussing quantum-classical hybrid methods in sensor placement and vehicle routing, acknowledging technological limitations and complexity. Additionally, in \cite{bhavsar2023classification}, authors applied quantum machine learning to asteroid classification, facing challenges with quantum processors and data management. In contrast, \cite{winker2023quantum} and \cite{wang2021resource} focused on database query processing and IoV resource management, highlighting scalability, reliability, and current quantum computing limitations. The research outlined in \cite{taneja2024quantum} explored the application of reconfigurable intelligent surfaces (RISs) within a quantum computing-based algorithm for RIS selection, aiming to optimize the transmission of qubits from mobile nodes to access points. This methodology seeks to enhance network performance and minimize energy consumption while considering the dynamic mobility of vehicular networks and the consequent variability in line-of-sight (LoS) conditions. Despite the improvements in communication reliability afforded by the use of multiple RISs, the technique needs advanced control algorithms to manage the RISs effectively in a constantly changing vehicular environment.

Distinct from existing literature, we employ entangled quantum feature mapping for intricate data encoding, a method not explored in previous studies. This allows for a more accurate representation of sensor data in quantum states. Additionally, using a parameterized ansatz circuit for quantum state transformation and a fidelity-based cost function for model optimization provides a more precise and adaptable approach than traditional methods. Additionally, applying the quantum Fourier transform and Toffoli gates for state manipulation advances existing literature, offering a novel solution in quantum environmental monitoring.

\section{Problem Statement}
This research addresses explicitly three interconnected challenges: first, the conversion of sensor data from vehicles into a quantum-compatible format; second, the application of a quantum variational classifier for accurate prediction of gas concentrations; and third, the use of quantum entanglement to facilitate rapid, real-time transmission of hazard alerts within a vehicular network.

\subsection{Conversion of Sensor Data to Quantum Data}
To address the first challenge, we focus on the data obtained from the vehicle's sensors, referred to as vector \( \mathbf{S} \). This vector represents the gas concentration. This vector is crucial in the overall quantum computation process as it is the initial data point converted into a quantum state. Our goal is to express these vector measurements as a quantum state \( \ket{\lambda} \) in Hilbert space. The quantum state initialized to 0 undergoes the implementation of the rotation, denoted as \( \mathbb{R_{\theta}} \). The magnitude of angular displacement, represented by \( \theta \), is determined based on the sensor data.
\begin{equation}
\ket{\lambda} = \mathbb{R}_\theta(\mathbf{S})) \ket{\lambda_0}    
\end{equation}
The problem is determining the ideal alignment between classical input and quantum states to ensure the accuracy of subsequent quantum computations.

\subsection{Variational Quantum Classifier for Prediction}
In our second challenge, we design a VQC. This classifier operates through a particular quantum circuit, represented as \( U(\mathbf{\Phi}) \). The \( \mathbf{\Phi} \) is a set of adjustable parameters fine-tuned during the VQC's training phase. The classifier makes these adjustments to make accurate predictions based on the quantum state \( \ket{\lambda} \) derived from the vehicle's sensor data. The problem state is to find probability distribution over potential hazard levels.
\begin{equation}
    P(\text{hazard level} | \mathbf{S}) = \text{Tr}\left[ \Pi \cdot U(\mathbf{\Phi}) \ket{\lambda}\bra{\lambda} U^\dagger(\mathbf{\Phi}) \right]
\end{equation}
Here, \( \Pi \) signifies the projector onto the measurement basis for different hazard levels, with Tr representing the trace operation. The effectiveness of the VQC in accurately predicting hazard levels will be a key focus of our validation efforts.

\subsection{Quantum Entanglement for Real-Time Alert System}
Finally, we address the challenge of rapid, real-time transmission of hazard alerts using quantum entanglement. The entire vehicular network is represented by a multipartite entangled state \( \ket{\Psi} \). The alert system's responsiveness is conceptualized via entanglement-based quantum teleportation, where an alert encoded in a quantum state \( \ket{\lambda} \) is transmitted to a designated vehicle or control center. This process is depicted as:
\begin{equation}
\ket{\lambda}_{\text{target}} = \sigma_x^a \sigma_z^b (I \otimes \bra{\beta_{ab}}) (\ket{\lambda} \otimes \ket{\Psi})    
\end{equation}
In this equation, \( \bra{\beta_{ab}} \) denotes the Bell state measurement, \( \sigma_x \) and \( \sigma_z \) are the Pauli matrices, and \( a, b \) are the classical bits derived from the measurement, which guide the teleportation process. 

\section{System Overview and Architectural Design}

Our proposed architectural design is fundamentally driven by the need to accurately and efficiently detect hazardous gases. Our approach not only governs the manipulation of quantum states but also seamlessly integrates these states into vehicle network communication protocols. 
\begin{figure}[ht]
    \centering
    \includegraphics[width=0.8\columnwidth]{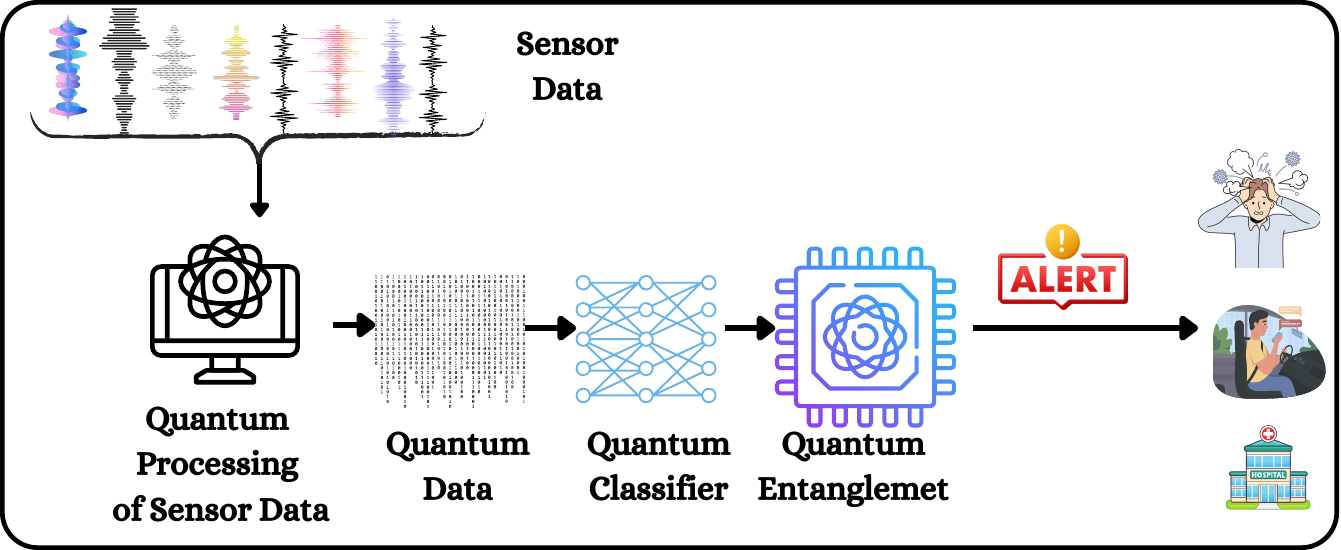}
    \caption{Quantum Process for qIoV Framework}
    \label{qpr}
\end{figure}
Fig. \ref{qpr} presents the model for environmental monitoring. This system harnesses the power of quantum computing to process data collected from various sensors equipped in vehicles. The collected data undergoes normalization and is then encoded into quantum states, enabling compatibility with quantum computational processes. This results in a sophisticated and multifaceted system uniquely built to meet the complex demands of modern environmental monitoring.

\subsection{Data Collection and Quantum State Encoding}
We equipped each vehicle with multiple sensors (Ref. Fig. \ref{sense}) i.e., MQ2, MQ3, MQ5, MQ6, MQ7, MQ8, MQ135 and GPS for air quality evaluation and location monitoring. These sensors form the foundation of the Quantum Mesh Network Fabric's (QMF's) monitoring capabilities. Once the process of data collection is finished, the data is subjected to normalization and quantum encoding. Sensor data is standardized to enable its utilization in quantum computing. The data is then stored in quantum states having multiple simultaneous quantum states \( \ket{\lambda} \). The importance of quantum state encoding resides in its ability to transform conventional sensor data into a format that quantum operations can process. The encoding technique is represented by the function \( E: S \rightarrow \ket{\lambda} \).

\begin{figure}[ht]
    \centering
    \includegraphics[width=95mm]{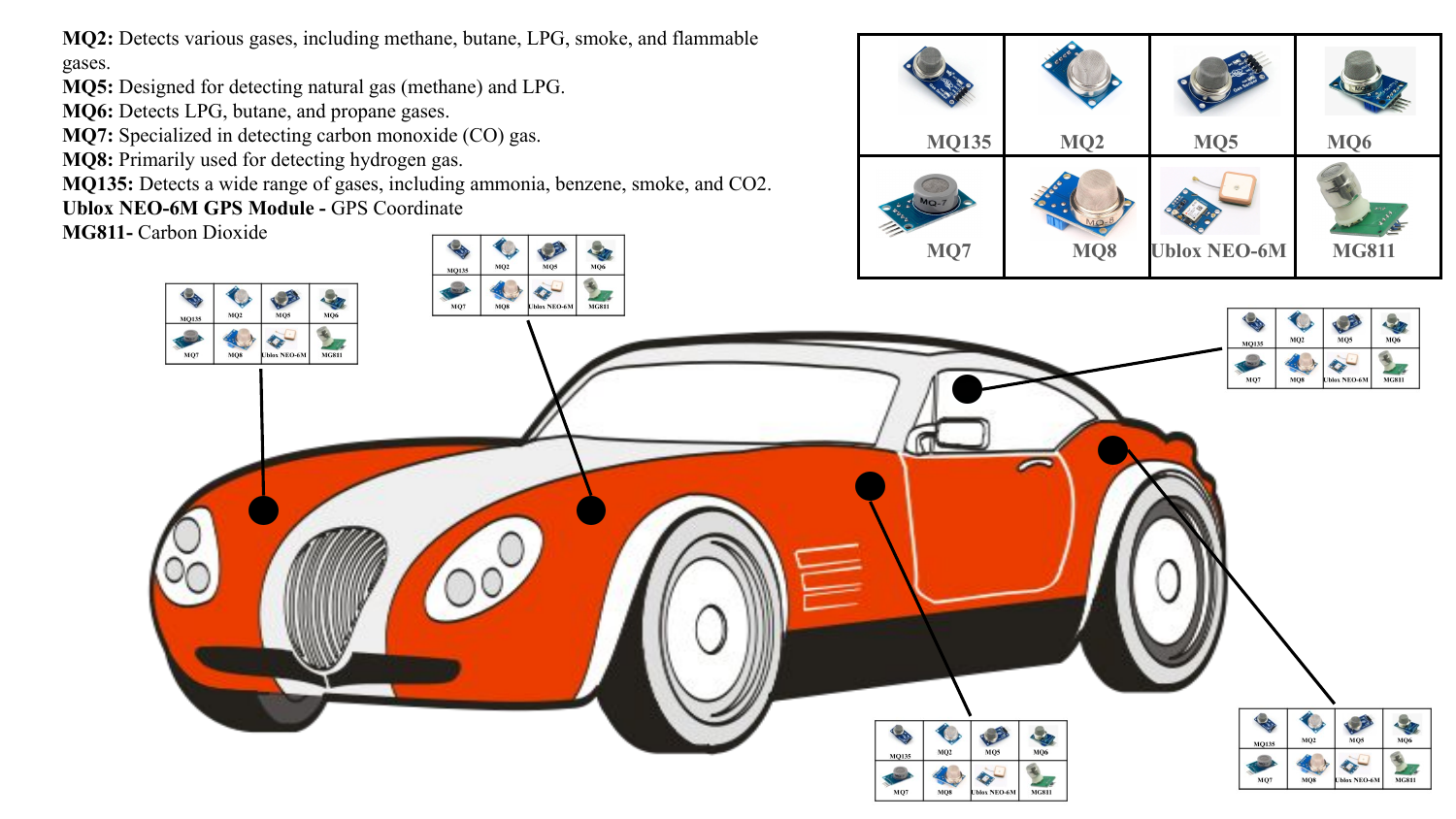}
    \caption{Environmental Sensing using Vehicular Sensors}
    \label{sense}
\end{figure}

\subsection{Quantum Graph and Network State Representation}
The vehicles are considered as quantum entities and present a dynamic quantum graph \( \mathcal{G}(t) \). The edges represent hypothetical quantum entanglements, while the nodes represent vectors of states in Hilbert space $(\mathcal{H})$. The network's state at any given time $t$ is defined by a density matrix \( \rho(t) \) that includes the quantum correlations between the individual states of the vehicles. The Lindblad master equation governs the dynamics of \( \rho(t) \) as it evolves.

\subsection{Communication Model: Quantum Teleportation}

The QMF achieves real-time communication through the utilization of quantum teleportation. During monitoring hazardous gasses, the detected hazards are represented and encoded using a quantum state called \( \ket{\lambda} \). The state in question becomes entangled with the multipartite state of the network, which is represented as  \( \ket{\Psi} \). Bell state measurements are employed in teleportation approach to produce classical bits that aid in reconstructing the target \( \ket{\lambda} \).

\section{Proposed Methodology}
This section provides a detailed explanation of our framework. Our research focuses on three key processes: data conversion, entanglement-based communication, and error management. Initially, multi-dimensional sensor data is transformed into quantum-compatible states using an optimized encoding scheme that precisely calibrates rotation angles to reflect distinct environmental sensor patterns. Subsequently, we deploy a multipartite entanglement strategy, tailored for vehicular networks, to facilitate rapid and reliable environmental alert dissemination, employing quantum teleportation with vehicle-specific entanglement patterns. To counteract the quantum noise and computational errors inherent in dynamic vehicular environments, our approach incorporates real-time adaptive quantum circuit optimization. This step leverages stabilizer codes for error detection and correction, dynamically adjusting quantum gate parameters to preserve the fidelity of quantum communications. Together, these processes establish a robust quantum-enhanced system for environmental monitoring and hazard alert dissemination within vehicular networks.

\subsection{Quantum Data and Encoding and Conversion}

During the initial stage of our process, we collect environmental data from a range of vehicle sensors, specifically MQ2, MQ3, MQ5, MQ6, MQ7, MQ8, and MQ135. Each sensor produces a unique set of environmental data, represented by the notation $x_1, x_2, \ldots, x_7$. Symbolized as \( \mathbf{S} \in \mathbb{R}^{m \times n} \), where $n$ represents the number of sensors and $m$ represents the number of time-dependent data points per sensor. Every occurrence of $x_i$ adds to a thorough dataset, filling a specific row or column in \( \mathbf{S} \), which is then analyzed further.

\subsubsection{Normalization Process}

Given the extensive range of sensor data, standardization is critical. It ensures that the total data collected from each sensor $\mathbf{S_i}$ corresponds precisely to the final quantum state. The data is transformed using Min-Max scalar normalization, which maps the values to a range between $[0,1]$. Eq. \eqref{minmax} is used to calculate the normalized value.
\begin{equation}
s'_{ij} = \frac{s_{ij} - \min(\mathbf{S}_i)}{\max(\mathbf{S}_i) - \min(\mathbf{S}_i)}    
\label{minmax}
\end{equation}
This ensures the data is appropriate for quantum processing by establishing a correlation between each reading and the sensor's range.

\subsubsection{Quantum State Encoding via Entangled Feature Mapping}

Our approach utilizes an entangled quantum feature Map (EQFM) for quantum state encoding (Ref. Fig. \ref{zz}), where entanglement significantly enhances data representation. The encoding, $U_{\theta}^{E}(\boldsymbol{\Phi})$, applies normalized data to a circuit combining controlled-Z and Hadamard gates, producing a highly entangled state suited for complex data patterns. The circuit is defined as:
\begin{equation}
U_{\theta}^{E}(\boldsymbol{\Psi}) = \bigotimes_{i=1}^{7} \left( H_i \cdot CZ_{i,t}(\boldsymbol{\Psi}_i) \right),
\end{equation}

\begin{figure}[ht]
\centering
\includegraphics[width=\linewidth]{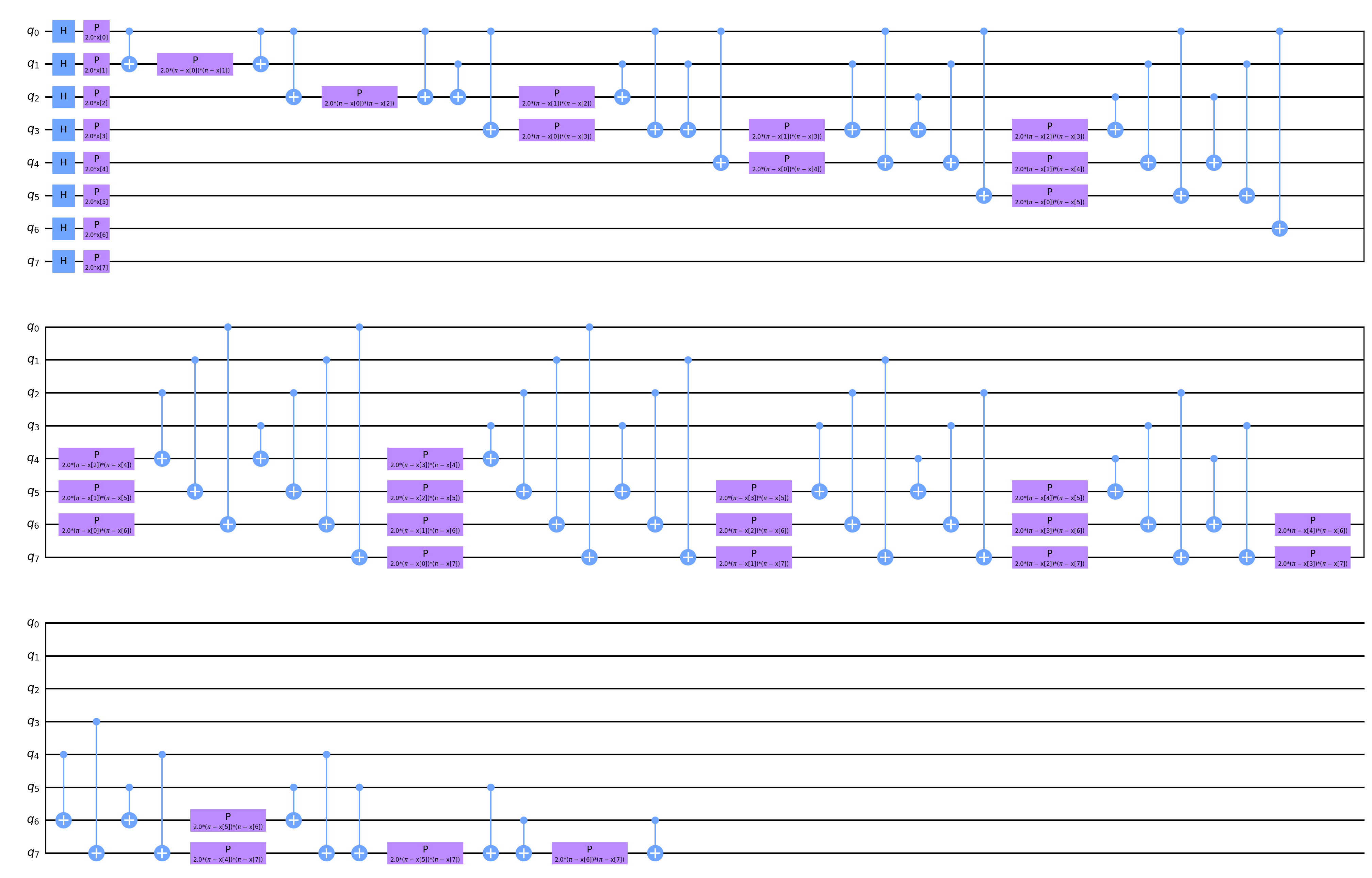}
\caption{Entangled Quantum Feature Map for Advanced Data Encoding}
\label{zz}
\end{figure}

\subsubsection{Parameterized Ansatz for Quantum State Evolution}

We employ a parameterized ansatz circuit $|\omega(\boldsymbol{\theta})\rangle$ (Ref. Fig. \ref{ans}) for quantum state transformation. This includes multi-qubit gates (iSWAPs, CPHASE) and adaptive single-qubit rotations, enhancing the manipulation of quantum states. The processing of evolved sensor data through this ansatz is:

\begin{figure}[ht]
\centering
\includegraphics[width=\linewidth]{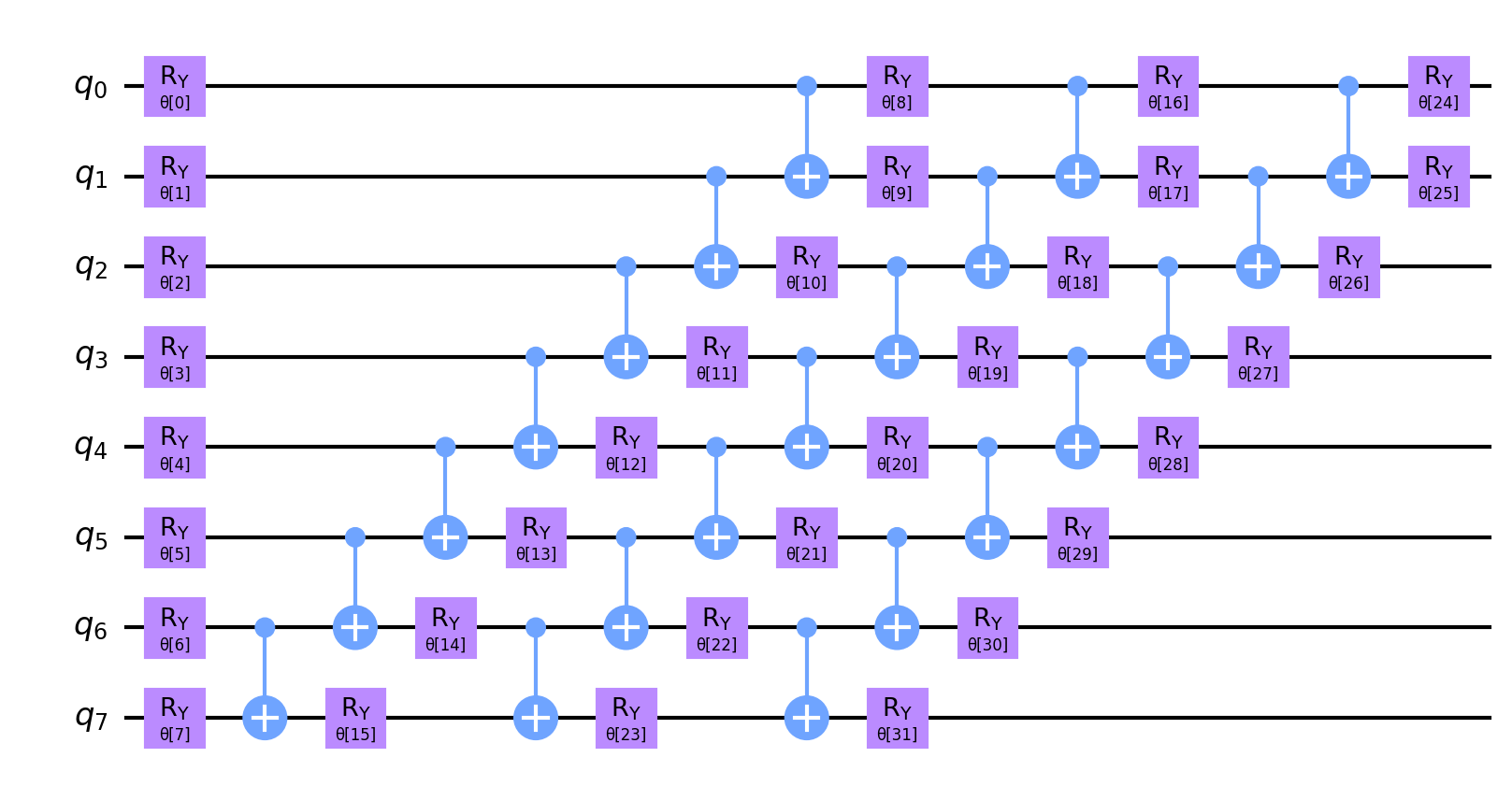}
\caption{Enhanced Ansatz Circuit for Advanced Quantum State Evolution}
\label{ans}
\end{figure}

\begin{equation}
|\omega(\boldsymbol{\theta})\rangle = U_{\text{dynamic}}^{(m)}(\boldsymbol{\theta}_{\text{dynamic}}) \cdot U_{\text{multi-qubit}}^{(m)}(\boldsymbol{\theta}_{\text{multi-qubit}}),
\end{equation}

\subsubsection{Quantum Data Transformation Technique}

Our framework encodes normalized sensor data \( \mathbf{S'} \) into quantum states using a mapping function \( h: \mathbf{S'} \rightarrow \vec{\theta} \), linking data points to rotation angles for optimal quantum circuit utilization. For each qubit \( q_i \), the rotation gate \( R_{\text{advanced}}(\theta_i) \) is:

\begin{equation}
R(\theta_i) = \begin{pmatrix} \cos(\theta_i) & -i\sin(\theta_i) \\ -i\sin(\theta_i) & \cos(\theta_i) \end{pmatrix},
\end{equation}

This advanced gate ensures nuanced data encoding. The mapping function \( h \) is adaptive, optimizing sensor data representation in the quantum domain.

\subsubsection{Quantum Encoding and State Preparation}

Sensor data encoding into a quantum state \( U(\mathbf{S'}) \) involves both single and multi-qubit rotations. The resulting quantum state \( |\lambda(\boldsymbol{\theta})\rangle \) is crucial for quantum computations. The encoding is performed as follows:

\begin{equation}
U(\mathbf{S'}) = \bigotimes_{i=1}^{n} R(\theta_i(s'_i))
\end{equation}

The initial state \( (\ket{\lambda}_0) \) evolves into the encoded state \( \ket{\lambda} \), with precision in \(\theta\) calculations essential for state fidelity.

\subsection{Development of Quantum Variational Classifier for Toxic Gas Prediction}

Our primary research aims to develop a quantum classifier that can reliably forecast quantities of dangerous gasses. This section thoroughly explains the internal mechanisms of the VQC, beginning from its creation and continuing to its practical implementation. The development of the VQC involves a multi-stage approach. Initially, the data obtained from traditional sensors is converted into a format that quantum computers can understand. We evaluate the correctness of our quantum model by employing a technique called fidelity-based cost function. In the final phase, we utilize an advanced optimization technique to enhance the performance of our model.

\subsection{Algorithmic Overview}
The construction of the VQC, as depicted in Algorithm \ref{alg:vqc_gas_prediction_advanced}, entails several fundamental steps:
\begin{algorithm}[ht]
\caption{Advanced Variational Quantum Classifier with Adaptive Optimization}
\label{alg:vqc_gas_prediction_advanced}
\begin{algorithmic}[1]
\STATE \textbf{Input:} Sensor data $\{x^{(i)} = (x_1^{(i)}, x_2^{(i)}, \ldots, x_7^{(i)})\}_{i=1}^{N}$, Variational parameters $\boldsymbol{\theta}$, Number of layers $M$
\STATE \textbf{Output:} Predicted gas types for given sensor data
\FOR{$i=1$ to $N$}
    \STATE Encode sensor data into quantum states: $|\lambda^{(i)}(\boldsymbol{\theta})\rangle \leftarrow U_{\text{entangle}}^{(M)}(\boldsymbol{\theta}_{\text{entangle}}) \cdot U_{\text{rotation}}^{(M)}(\boldsymbol{\theta}_{\text{rotation}}) |0\rangle^{\otimes n}$
\ENDFOR
\STATE Initialize cost function: $C(\boldsymbol{\theta}) \leftarrow 0$
\FOR{$i=1$ to $N$}
    \STATE Execute quantum circuit: $U(\boldsymbol{\theta})|\lambda^{(i)}(\boldsymbol{\theta})\rangle$
    \STATE Measure probabilities $P(y_j|\boldsymbol{\theta})$ for gas classes $y_j$
    \STATE Calculate fidelity $F(|\lambda_{\text{expected}}^{(i)}\rangle, U(\boldsymbol{\theta})|\lambda^{(i)}(\boldsymbol{\theta})\rangle)$
    \STATE Update cost function: $C(\boldsymbol{\theta}) \leftarrow C(\boldsymbol{\theta}) + (1 - F(|\lambda_{\text{expected}}^{(i)}\rangle, U(\boldsymbol{\theta})|\lambda^{(i)}(\boldsymbol{\theta})\rangle))^2$
\ENDFOR
\STATE \textbf{Optimization:}
\STATE Set initial learning rate $\eta$, threshold $\epsilon$, and maximum iterations $K$
\STATE Initialize $\boldsymbol{\theta}_0$ randomly
\STATE $k \leftarrow 0$
\WHILE{$k < K$ or $\Delta C(\boldsymbol{\theta}_k) > \epsilon$}
    \STATE Calculate gradient $\nabla C(\boldsymbol{\theta}_k)$
    \STATE Adjust learning rate: $\eta_k \leftarrow \eta / (\sqrt{k+1})$
    \STATE Update parameters: $\boldsymbol{\theta}_{k+1} \leftarrow \boldsymbol{\theta}_k - \eta_k \nabla C(\boldsymbol{\theta}_k)$
    \STATE $k \leftarrow k + 1$
\ENDWHILE
\RETURN Predicted gas types based on optimized $\boldsymbol{\theta}$
\end{algorithmic}
\end{algorithm}

\subsubsection{Quantum State Encoding}

Given classical sensor data $x^{(i)} = (x_1^{(i)}, x_2^{(i)}, \ldots, x_7^{(i)})$ for $N$ instances, we prepare the quantum state, denoted as $|\lambda^{(i)}(\boldsymbol{\theta})\rangle$, by using specially designed quantum circuits. These circuits are adjustable ('parameterized') and tailored to represent our sensor data in the quantum realm.
\begin{equation}
|\lambda^{(i)}(\boldsymbol{\theta})\rangle \leftarrow U_{\text{entangle}}^{(M)}(\boldsymbol{\theta}_{\text{entangle}}) \cdot U_{\text{rotation}}^{(M)}(\boldsymbol{\theta}_{\text{rotation}}) |0\rangle^{\otimes n}.
\end{equation}
Here, $U_{\text{entangle}}^{(M)}(\boldsymbol{\theta}_{\text{entangle}})$ and $U_{\text{rotation}}^{(M)}(\boldsymbol{\theta}_{\text{rotation}})$ represent entangling and single-qubit rotation gates with $M$ layers, parameterized by $\boldsymbol{\theta}_{\text{entangle}}$ and $\boldsymbol{\theta}_{\text{rotation}}$ respectively.

\subsubsection{Cost Function Formulation}

A 'fidelity-based cost function' in quantum computing measures how closely our quantum model's predictions match the expected outcomes. It serves as a scoring system that tells us how accurate our model is. The fidelity-based cost function $C(\boldsymbol{\theta})$ is defined to capture the discrepancy between expected and measured quantum states:
\begin{equation}
C(\boldsymbol{\theta}) \leftarrow \sum_{i=1}^{N} \left(1 - F(|\lambda_{\text{expected}}^{(i)}\rangle, U(\boldsymbol{\theta})|\lambda^{(i)}(\boldsymbol{\theta})\rangle)^2\right).
\end{equation}
Here, $F(|\lambda_{\text{expected}}^{(i)}\rangle, U(\boldsymbol{\theta})|\lambda^{(i)}(\boldsymbol{\theta})\rangle)$ represents the fidelity between the expected and measured quantum states.

\subsubsection{Adaptive Optimization Scheme}

Our optimization scheme uses a step-by-step ('iterative') approach to improve our model gradually. In each step, we adjust the learning rate $\eta_k$ (a measure of how much the model changes in each step) to ensure steady and effective improvement.
\begin{equation}
\boldsymbol{\theta}_{k+1} \leftarrow \boldsymbol{\theta}_k - \eta_k \nabla C(\boldsymbol{\theta}_k),
\end{equation}
where $\nabla C(\boldsymbol{\theta}_k)$ denotes the gradient of the cost function concerning the variational parameters. The quantum state encoding, fidelity-based cost function, and adaptive optimization scheme synergistically create a robust framework for fine-tuning variational parameters based on fidelity alterations. This complexity is fundamental in capturing subtle variations within quantum states derived from sensor data, facilitating precise gas-type predictions leveraging quantum computational advantages.

\subsubsection{Extensive Quantum State Processing and Measurement for Hazard Level Assessment}

After the training phase, the VQC deploys its fine-tuned circuit on freshly encoded quantum sensor data, represented as \( \ket{\lambda} \). The resulting state, \( \ket{\lambda'} \), achieved through the operation \( U(\mathbf{\Phi}) \ket{\lambda} \), reflects the VQC's analysis of various toxic gas concentrations. To extract actionable insights, the state \( \ket{\lambda'} \) undergoes measurement using a series of projective operators \( \{\Pi_i\} \), each linked to a specific level of hazard. The likelihood of encountering a particular hazard level, based on the sensor data, is determined using the formula:
\begin{equation}
P(\text{hazard level}_i | \mathbf{S}) = \text{Tr}\left[ \Pi_i \cdot U(\mathbf{\Phi}) \ket{\lambda}\bra{\lambda} U^\dagger(\mathbf{\Phi}) \right].    
\end{equation}

This method, which weaves together data encoding processes, quantum circuit construction, training methodologies, and practical execution, thoroughly delineates the development of a VQC for predicting toxic gas levels. It underscores the system's potential to significantly boost public safety by providing early warnings of hazardous conditions.

\subsection{Quantum Entanglement Process for Toxic Gas Detection}

In contemporary quantum physics research, the phenomenon of quantum entanglement stands as a foundation, particularly in the application to QMF. Our qIoV framework is designed to facilitate communication within vehicular networks and provide critical monitoring of hazardous gasses. Entanglement, in this context, refers to a distinct quantum behavior where the states of particles (i.e., electrons or photons) become interlinked so that the condition of one particle instantaneously affects its counterpart, transcending the constraints of spatial distance.

\begin{figure}[ht]
    \centering
    \includegraphics[width=0.9\linewidth]{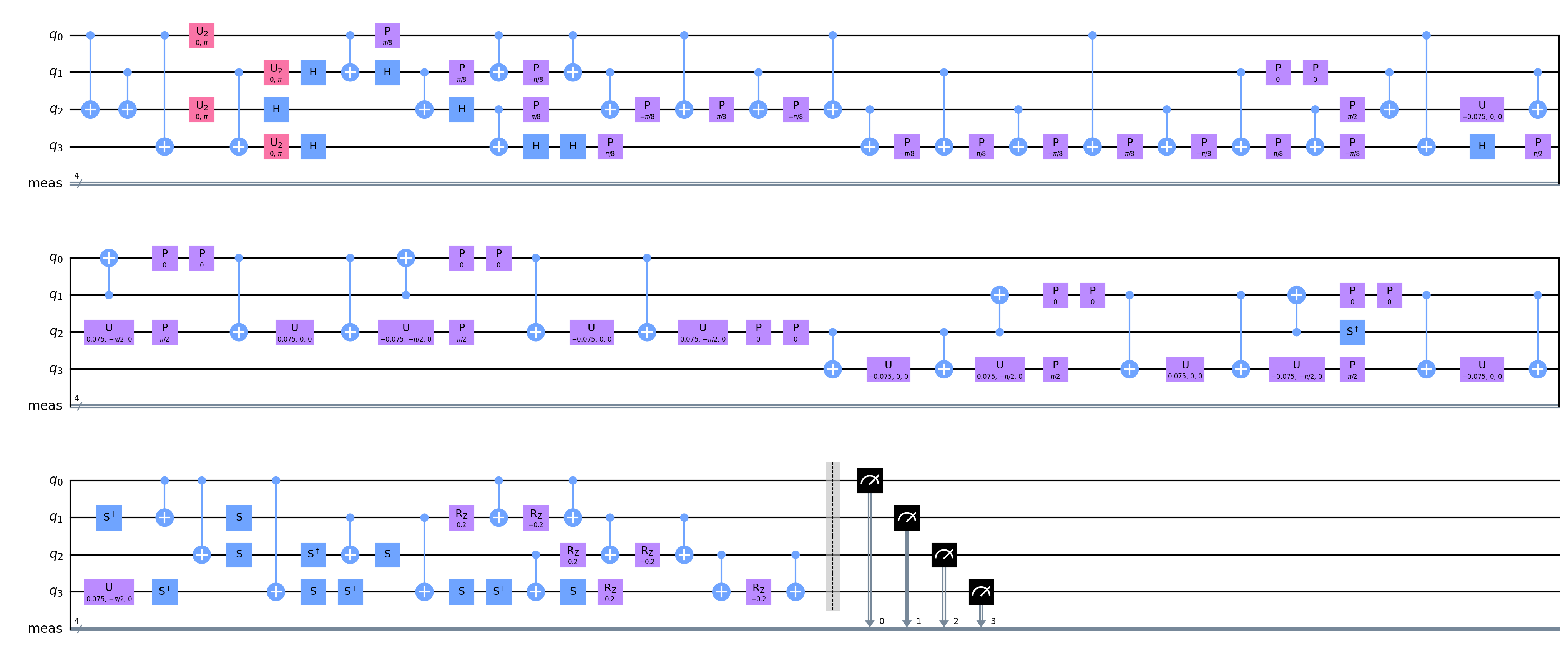}
    \caption{Quantum Circuit Representation of Communication}
    \label{fig:quantum_circuit}
\end{figure}

Our model leverages quantum entanglement and teleportation for efficient, secure state transitions, essential in real-time environmental data transmission. It is based on a quantum circuit with a three-qubit quantum register, optimized through an analysis of vehicular dynamics and illustrated in Fig. \ref{fig:quantum_circuit}.

We use Bell State Preparation for qubit entanglement, assigning each vehicle a quantum identifier, $i$, and a corresponding server qubit, $j$, in a network of $n$ vehicles.

\begin{equation}
|\Psi\rangle_i \xrightarrow{\text{Bell}} \frac{1}{\sqrt{2}}(|00\rangle + |11\rangle)_{i, n+j}
\end{equation}

Qubits initially transition to a Bell state, a more advanced technique than Hadamard gates, for superior information encoding.

\begin{equation}
|\Psi\rangle_i \xrightarrow{\text{Bell}} \frac{1}{\sqrt{2}}(|00\rangle + |11\rangle)_{i, j}
\end{equation}

Next, quantum Fourier transform (QFT) manipulates entanglement states between qubits for data embedding and transfer.

\begin{equation}
|\Psi\rangle_i, |\Psi\rangle_{i+1} \xrightarrow{\text{QFT}} QFT(|\Psi\rangle_i), QFT(|\Psi\rangle_{i+1})
\end{equation}

Enhancements include a Toffoli gate for strengthening quantum correlations and quantum phase estimation (QPE) for precise phase calculations, ensuring dynamic network interactions. The protocol concludes with Quantum Teleportation and a controlled-$Z$ rotation gate for advanced quantum manipulations.

\begin{algorithm}[h]
\caption{Quantum Communication Protocol}
\label{quantum_protocol}
\begin{algorithmic}[1]
\REQUIRE vehicle count $(n)$, number of rounds $(R)$, quantum register size per vehicle $(m)$
\ENSURE Quantum communication protocol
\STATE Initialize a quantum circuit with $3m = 3n$ qubits.
\FOR{$r_i$ = $1$ to $R$}
    \FOR{each vehicle $v_i$, $i = 1$ to $n$}
        \STATE Prepare Bell state for qubits $(3i-2, 3i-1)$: $|\Psi\rangle_{3i-2, 3i-1} \xrightarrow{\text{Bell}} \frac{1}{\sqrt{2}}(|00\rangle + |11\rangle)$
        \STATE Apply QFT on the third qubit of each vehicle: $|\Psi\rangle_{3i} \xrightarrow{\text{QFT}} QFT(|\Psi\rangle_{3i})$
    \ENDFOR
    \FOR{each pair of vehicles $(v_i, v_j)$, $i \neq j$}
        \STATE Apply Toffoli gate to strengthen quantum correlations: $\text{Toffoli}(|\Psi\rangle_{3i-1}, |\Psi\rangle_{3j-1}, |\Psi\rangle_{3i})$
        \STATE Apply QPE for precise phase calculations: $\text{QPE}(|\Psi\rangle_{3i}, |\Psi\rangle_{3j})$
    \ENDFOR
    \FOR{each vehicle $v_i$}
        \STATE Perform Quantum Teleportation using a controlled-$Z$ rotation gate: $\text{Teleport}(|\Psi\rangle_{3i-2}, |\Psi\rangle_{3i-1}, |\Psi\rangle_{3i})$
    \ENDFOR
\ENDFOR
\STATE Measure all qubits.
\STATE Transpile the circuit for optimization.
\RETURN Quantum circuit representing the communication protocol.
\end{algorithmic}
\end{algorithm}

\section{PERFORMANCE EVALUATION}
This section assesses the qIoV framework, focusing on its proficiency in detecting toxic gases with enhanced precision. We concentrated on accurately identifying different gas types by utilizing vehicle-mounted sensor data. The pair plot method was pivotal in analyzing data correlations and enhancing testing accuracy. Additionally, we probed quantum data distributions under varying conditions on the IBM OpenQSAM 3 platform, which was equipped with a 127 Qubit system.

\subsection{Implementation Details of VQC on the IBM OpenQSAM 3 Platform}
The experiments were carried out by designing a parameterized quantum circuit specific to the VQC. Utilizing the quantum circuit composer provided by the OpenQSAM 3 platform, we arranged Hadamard, Pauli-X, Controlled-Z, and rotation gates to create an initial entangled state that could capture the complexity of the sensor data. Sensor data encoding was performed using a customized quantum feature map, which was developed to transduce the classical gas concentration readings into quantum states. The optimization was executed through OpenQSAM 3’s hybrid quantum-classical interface. At each iteration, the platform computed the cost function's value by executing the quantum circuit with the current parameters and measuring the output. The fidelity-based cost function was then evaluated by calculating the overlap between the VQC's output state and the ideal state corresponding to the correct classification of gas types. Parameter updates were governed by a gradient-based optimization algorithm. The gradients were estimated using the parameter-shift rule. To account for the effects of quantum noise and potential hardware imperfections, we introduced depolarizing noise into our simulations. This allowed us to assess the VQC’s resilience and adjust our circuit design and error mitigation strategies accordingly. Post-optimization, the trained VQC was deployed within the OpenQSAM 3 environment for the classification of new sensor data. The real-time execution of the quantum circuit allowed for the immediate classification of gas concentrations, and the results were integrated into the IoV system to trigger appropriate hazard alerts. The experimental results, including precision, recall, and F1-Score metrics, were benchmarked against classical machine learning classifiers to validate the quantum advantage offered by the VQC. Further, the exploration of sensor data was based on an advanced pair plot visualization technique. This approach unraveled complex correlations and distinct attributes of various gases. The data, categorized into four groups - 'No Gas,' 'Perfume,' 'CO2 and CO Mixture', and 'Toxic Gases Mixture,' was intricately displayed through a matrix of scatter plots and histograms. This visualization provided an understanding of gas detection. Our theoretical analysis also encompassed rotation validation, teleportation protocol efficiency, and quantum entanglement fidelity, offering a holistic view of the system's functionality.

\subsection{Pair Plot Analysis}
The pair plot analysis is used to reveal correlations, outliers, and clusters within the data. In our experimetal analysis, correlations between sensor readings indicate similar sensitivity profiles to certain gas concentrations, while outliers indicates anomalous detections or sensor malfunctions. Clusters of data points, as color-coded by gas type, provide visual evidence of the sensors' discrimination capabilities between different gas concentrations. The data matrix \textbf{S} serves as the foundation for a multidimensional analysis of sensor outputs. The matrix, denoted as \textbf{D}, represents a composite distribution of sensor readings across multiple gas types, enabling a detailed pairwise comparison of sensor responses. Each scatter plot within the matrix contrasts two different sensors' readings, with the data points color-coded to represent distinct gas concentrations, thus allowing for the immediate visual discrimination of distribution patterns. The diagonal of the pair plot matrix provides histogram representations of the sensor readings, offering insights into the univariate distribution for each sensor. Fig. \ref{fig:3}  visually represents the procedure and illustrates how the data from the sensors correspond to the concentrations of the different gasses.

\begin{figure}[ht]
    \centering
    \includegraphics[width=\textwidth,height=130mm]{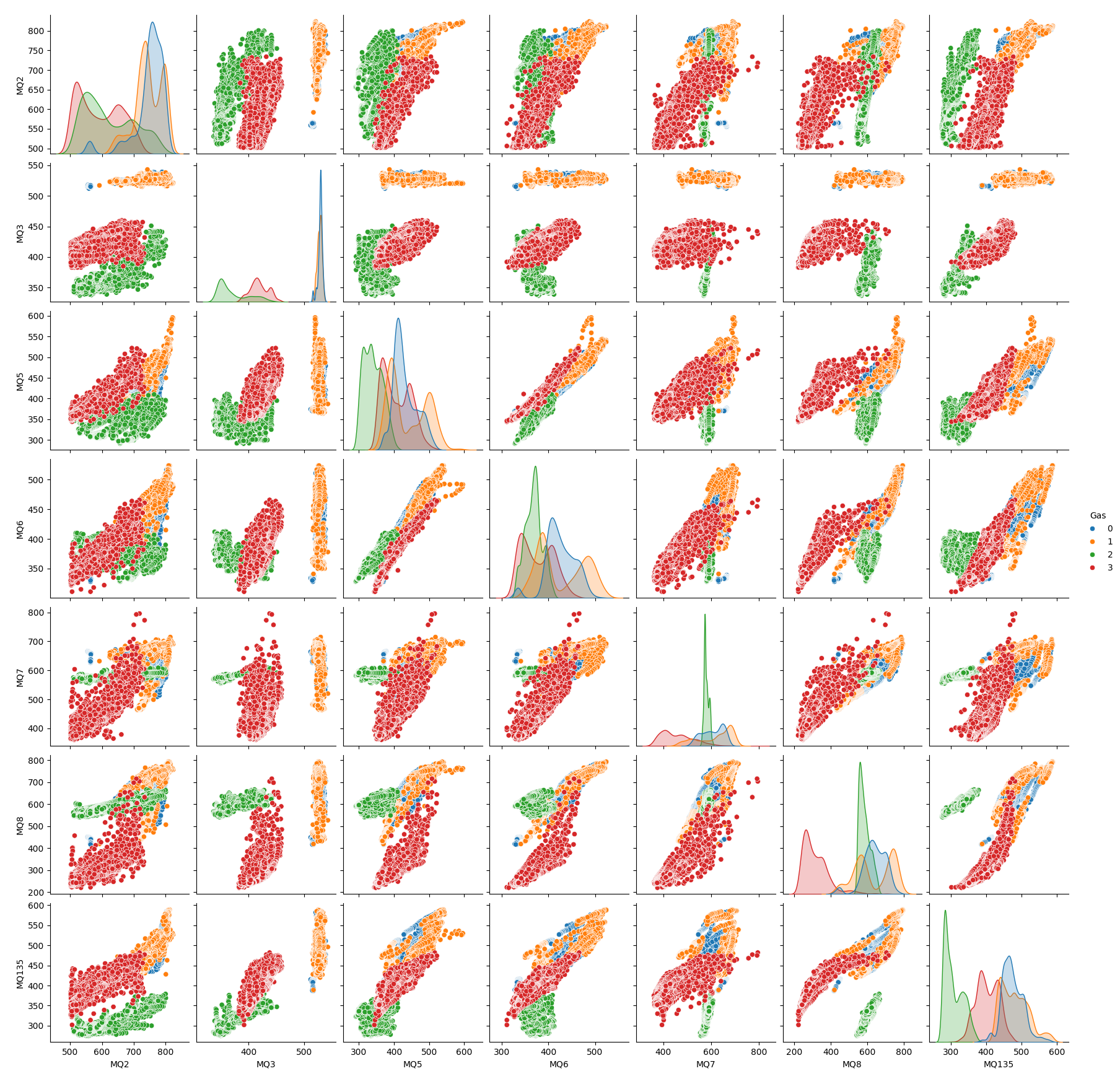}
    \caption{Pair plot showing the relationships between sensors for different gas types.}
    \label{fig:3}
\end{figure}

We conducted thorough sensitivity and reliability testing to assess the sensors' capacity to detect gas concentrations based on our own dataset \footnote{https://github.com/TakMashhido/Gas-Sensors-Measurements-Dataset}. The experimental design consisted of 6,400 data points, categorized into four groups depending on gas concentration. We allocated 1600 data points evenly across all groups to provide a comprehensive and unbiased study. The methodical arrangement of the data was crucial in mitigating biases and enabling a comprehensive evaluation of the sensor's performance in various configurations. In our experiment, positive correlations are evidenced by the clustering of data points along a line in the scatter plots. For instance, MQ2 and MQ7 display a tight linear cluster suggesting a strong positive correlation, indicating that these sensors have similar responses to the varying gas concentrations within the tested range. The degree of spread in the data points reflects the variability in sensor responses. A wide dispersion represents a broad detection range or a sensitivity to a variety of gas types, while a narrow dispersion indicates specificity. Conversely, the presence of distinct clusters along the scatter plot axes indicates a differential response from the sensors to specific gases. This is pivotal, as it demonstrates the array's capability to differentiate between gases, which is essential for accurate environmental monitoring. Further, the experiments reveal the distribution of readings across all tested gas concentrations for a single sensor, which is critical for understanding the sensor's sensitivity range and the likelihood of its activation under different gas concentration levels. Sensors MQ7 and MQ8 have a broader response distribution, indicating versatility in detecting a range of gas concentrations, while MQ3 and MQ6 show a narrower distribution, hinting at more specific gas detection capabilities. Further, the range of values—minimum to maximum for each sensor also indicate their operational range and detection thresholds. 
\begin{itemize}
    \item MQ2: Min = 502.0, Max = 824.0
    \item MQ3: Min = 337.0, Max = 543.0
    \item MQ5: Min = 291.0, Max = 596.0
    \item MQ6: Min = 311.0, Max = 524.0
    \item MQ7: Min = 361.0, Max = 796.0
    \item MQ8: Min = 220.0, Max = 794.0
    \item MQ135: Min = 275.0, Max = 589.0
\end{itemize}
For instance, MQ2's operational range is from 502 to 824, indicating its capability to detect higher concentrations, which is suitable for environments where such conditions are met.

\subsection{Probability Distribution: Theoretical and Real-Time}

In the ideal theoretical probability distribution (Ref. Table \ref{tab1} ), we observe a high probability for the '0000' state across all three bases (Z, X, Y). This indicates that the quantum state encoding and preparation processes are highly effective. The precision in the initial state $|\lambda(\boldsymbol{\theta})\rangle$ evolving into the encoded state $|\lambda\rangle$ with accurate $\boldsymbol{\theta}$ calculations is crucial here. The high fidelity of this state suggests that the quantum entanglement process and the entangled quantum feature map (EQFM) are functioning optimally, creating a highly entangled state that accurately represents the sensor data. The low probability of other outcomes further indicates that the quantum state preparation and encoding minimize errors and noise in the system.

\begin{table}[ht]
    \centering
    \caption{Theoretical Probability Distribution (Ideal)}
    \resizebox{0.5\columnwidth}{!}{
    \begin{tabular}{|c|c|}
        \hline
        \textbf{Basis} & \textbf{Theoretical Probability Distribution} \\
        \hline
        Z-Basis (P\_Z) & $P_Z('0000') = \text{High Probability}$ \\
                       & $P_Z(\text{other outcomes}) = \text{Low Probability}$ \\
        \hline
        X-Basis (P\_X) & $P_X('0000') = \text{High Probability}$ \\
                       & $P_X(\text{other outcomes}) = \text{Low Probability}$ \\
        \hline
        Y-Basis (P\_Y) & $P_Y('0000') = \text{High Probability}$ \\
                       & $P_Y(\text{other outcomes}) = \text{Low Probability}$ \\
        \hline
    \end{tabular}
    }
    \label{tab1}
\end{table}

When introducing errors (Ref. Table \ref{tab2}), the probability of observing the '0000' state decreases, while the probabilities of other outcomes increase across all bases. This change reflects the impact of environmental or systemic errors on the quantum state. The deviation from the ideal distribution is attributed to factors like quantum decoherence and operational errors in the quantum circuit, including gate errors or state preparation and measurement (SPAM) errors. The presence of errors in quantum state evolution, as handled by parameterized ansatz circuit $|\omega(\boldsymbol{\theta})\rangle$, also contributes to these discrepancies.

\begin{table}[ht]
    \centering
    \caption{Theoretical Probability Distribution (with error)}
    \resizebox{0.6\columnwidth}{!}{
    \begin{tabular}{|c|c|}
        \hline
        \textbf{Basis} & \textbf{Theoretical Probability Distribution} \\
        \hline
        Z-Basis Measurement (P\_Z\_Error) & $P_Z\_Error('0000') < \text{High Probability}$ \\
                                          & $P_Z\_Error(\text{other outcomes}) > \text{Low Probability}$ \\
        \hline
        X-Basis Measurement (P\_X\_Error) & $P_X\_Error('0000') < \text{High Probability}$ \\
                                          & $P_X\_Error(\text{other outcomes}) > \text{Low Probability}$ \\
        \hline
        Y-Basis Measurement (P\_Y\_Error) & $P_Y\_Error('0000') < \text{High Probability}$ \\
                                          & $P_Y\_Error(\text{other outcomes}) > \text{Low Probability}$ \\
        \hline
    \end{tabular}
    }
    \label{tab2}
\end{table}

The simulation results for real-time distribution in Table \ref{tab3} show a distribution of probabilities across various states in the Z, X, and Y bases. Unlike the ideal and error-included theoretical distributions, these results demonstrate a more uniform distribution, indicating a dynamic interaction of the quantum system with real-time environmental data. This reflects the adaptive optimization scheme employed in the methodology, where the quantum state is iteratively refined to represent the sensor data better. The variations in probabilities across different states suggest that the quantum system effectively captures the nuances of the environmental data. Higher probabilities for specific states (like '1011', '0010', '1101', etc.) in different bases indicate the successful encoding of sensor data into quantum states and the effective transformation of these states via the parameterized ansatz.

\begin{table}[ht]
    \centering
    \caption{Simulation Results in the Context of Quantum-Based Real-Time Vehicular Communication}
    \resizebox{0.6\columnwidth}{!}{
    \begin{tabular}{|c|c|c|c|||c|c|c|c|}
        \hline
        \textbf{Basis} & \textbf{Z-Basis} & \textbf{X-Basis} & \textbf{Y-Basis} & \textbf{Basis} & \textbf{Z-Basis} & \textbf{X-Basis} & \textbf{Y-Basis} \\
        \hline
        '1011' & 63 & 67 & 62 & '1000' & 71 & 75 & 53 \\
        '0010' & 59 & 63 & 66 & '1110' & 79 & 63 & 66 \\
        '0011' & 57 & 58 & 59 & '0101' & 51 & 62 & 64 \\
        '1101' & 64 & 63 & 70 & '1010' & 75 & 62 & 51 \\
        '0001' & 66 & 59 & 60 & '1100' & 62 & 66 & 70 \\
        '0000' & 56 & 61 & 56 & '1001' & 69 & 64 & 67 \\
        '0111' & 60 & 72 & 74 & '1111' & 71 & 61 & 78 \\
        '0110' & 58 & 67 & 53 & '0100' & 63 & 61 & 75 \\
        \hline
    \end{tabular}
    }
    \label{tab3}
\end{table}

The introduction of depolarizing error in real-time communication further shifts in probabilities. Depolarizing error is a type of quantum noise that leads to the loss of information about the state, causing it to trend toward a maximally mixed state. The observed probabilities in Table \ref{tab4} reflect the resilience of the quantum system against such noise. Although different from the ideal scenario, the system's ability to maintain a coherent distribution of probabilities indicates the robustness of the quantum encoding, state preparation, and adaptive optimization process. These results also underscore the impact of real-world factors and noise on quantum systems and the need for sophisticated error correction and noise resilience techniques in quantum computing.

\begin{table}[ht]
    \centering
    \caption{Results with Depolarizing Error in Real-Time Vehicular Communication}
    \resizebox{0.4\columnwidth}{!}{
    \begin{tabular}{|c|c||c|c|}
        \hline
        \textbf{State} & \textbf{Probabilities} & \textbf{State} & \textbf{Probabilities} \\
        \hline
        '1010' & 71 & '0100' & 51 \\
        '1111' & 76 & '0001' & 74 \\
        '0110' & 78 & '1101' & 67 \\
        '1100' & 59 & '1110' & 59 \\
        '1011' & 64 & '1001' & 60 \\
        '0010' & 61 & '0000' & 55 \\
        '0101' & 73 & '0011' & 66 \\
        '0111' & 57 & '1000' & 53 \\       
        \hline
    \end{tabular}
    }
    \label{tab4}
\end{table}

\subsection{VQC Model Prediction Result}
\subsubsection{Objective Function} Fig. \ref{ofv} illustrates a declining trend in objective function value against iteration number. It indicates the effectiveness of adaptive optimization scheme implemented in our research. The results reflects the iterative refinement of the variational parameters $\theta$ through the application of a gradient descent algorithm paired with an adaptive learning rate. 

\begin{figure}[ht]
\centering
\includegraphics[width=0.5\linewidth]{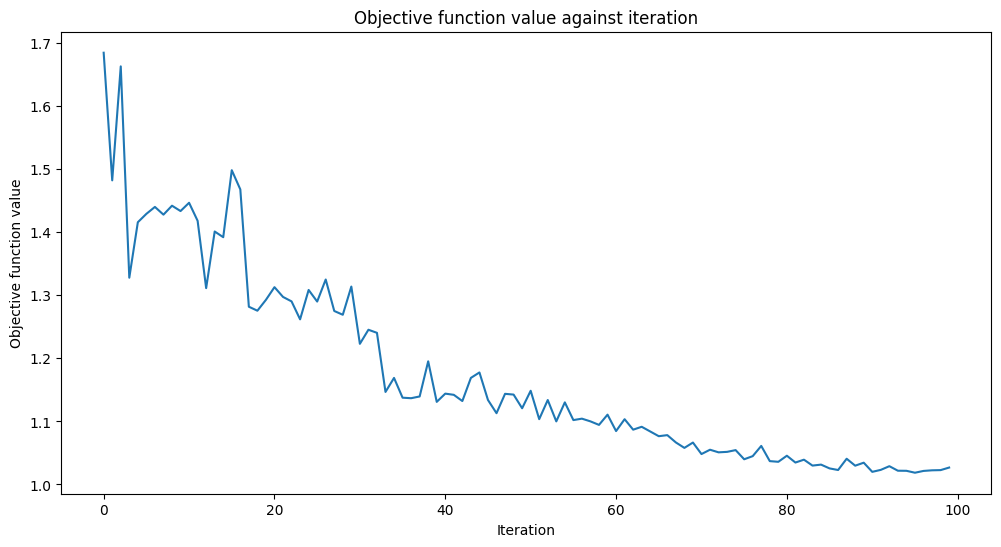}
\caption{Objective Function Values Against Iteration}
\label{ofv}
\end{figure}

The initial fluctuations observed in the graph represent the exploratory phase of the optimization process, where the algorithm searches the parameter space for a local minimum of the cost function. These fluctuations are an anticipated aspect of the optimization, attributable to the adaptive adjustment of the learning rate, which is modulated in response to the gradient history. This exploratory phase is critical for escaping any local minima that are not optimal. As the iterations proceed, the objective function value based on fidelity-based cost function, steadily decreases. This signifies that the optimization technique is effectively honing in on the optimal set of parameters that minimize the cost function. The fidelity-based cost function quantifies the closeness of the predicted quantum state to the true state, with lower values indicating higher fidelity. Therefore, the observed minimization of this function is indicative of the quantum classifier's improved accuracy in predicting gas concentrations. The adaptive learning rate plays a pivotal role in this optimization process. By adjusting the learning rate based on the gradient's magnitude, the algorithm ensures that the parameter updates are neither too large—potentially overshooting the minimum—nor too small, which could result in a prolonged convergence time. This balance accelerates the convergence of the optimization algorithm while maintaining the precision of the parameter updates.

\subsubsection{Confusion Matrix} The confusion matrix in Fig. \ref{conf} shows a strong diagonal, indicating highly accurate positive values for each class. Our quantum state encoding via entangled feature mapping influences this result. The entangled feature map increases the dimensionality of the feature space, which leads to better separation of the classes in the quantum state space. Moreover, the parameterized ansatz for quantum state evolution allows for a richer exploration of the quantum state space, potentially providing a more nuanced separation between classes that appear similar in the classical feature space.

\begin{figure}[ht]
\centering
\includegraphics[width=0.4\linewidth]{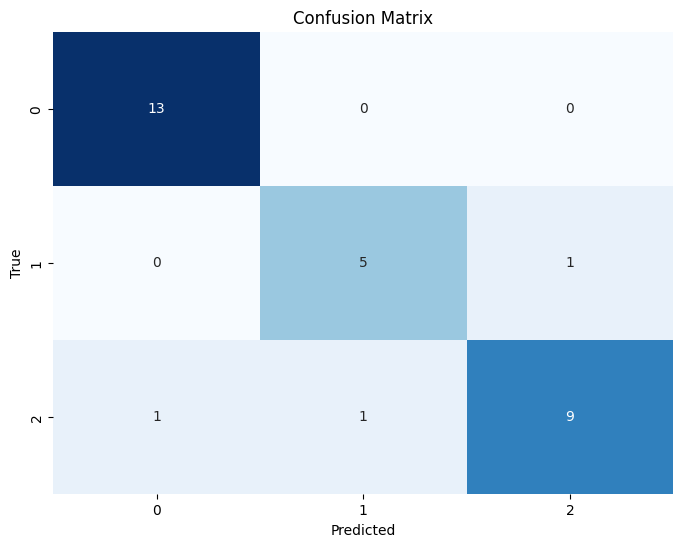}
\caption{Confusion Matrix}
\label{conf}
\end{figure}

\subsubsection{Classification Metrics}
We have categorized the classification metrics results into three classes based on precision and recall as shown in Table \ref{tab5}:

\textbf{Class 0:} Exhibits high precision and recall due to the effective normalization and quantum state encoding methods that preserve the fidelity of the original sensor data, allowing for distinct features of this class to be well-represented in the quantum feature space. This indicates that the quantum circuit has been well-optimized for these features.

\textbf{Class 1:} The balanced precision and recall suggest that while the feature space for this class is well-mapped, there is some overlap with other classes. The EQFM and the ansatz need further refinement to distinguish the features of this class better.

\textbf{Class 2:} The high precision but slightly lower recall implies that when the classifier predicts an instance to be in class '2', it is likely correct, but it misses some actual instances of class '2'. This is due to a slight underrepresentation of class '2' features in the entangled quantum feature map. In the quantum variational classifier, the fidelity-based cost function is minimized, indicating that the quantum states prepared by the circuit are close to the expected states corresponding to each class. The adaptive optimization technique enhances the classifier’s performance by efficiently finding the optimal parameters resulting in the respective classes' highest fidelity states.

\begin{table}[ht]
    \centering
\caption{Classification Performance Metrics}
\resizebox{0.4\columnwidth}{!}{
\begin{tabular}{|c|c|c|c|c|}
\hline
Class & Precision & Recall & F1-Score & Support \\
\hline
0 & 0.93 & 1.00 & 0.96 & 13 \\
1 & 0.83 & 0.83 & 0.83 & 6 \\
2 & 0.90 & 0.82 & 0.86 & 11 \\
\hline
\end{tabular}
}
\label{tab5}
\end{table}

\subsection{Detection Time Analysis}
The gas detection time analysis in Fig. \ref{F1} reveals significant insights into the performance of various detection models. Among the evaluated models, the qIoV Framework consistently outperforms traditional methods such as Support Vector Machine (SVM), Random Forest, and Logistic Regression in terms of detection speed. The dataset, comprising 100 data points, demonstrates that the qIoV Framework achieves detection times significantly lower than its counterparts. The technical superiority of the qIoV Framework is attributed to its optimized algorithmic structure, which leverages advanced data processing and analysis techniques. Our quantum circuits leverage qubits that can exist in multiple states simultaneously (quantum superposition), and they utilize quantum entanglement. These principles allow quantum algorithms to process complex datasets more efficiently than classical algorithms, leading to significant improvements in the speed of gas detection and classification. This efficiency is due to the quantum circuits' ability to perform multiple calculations at once and to quickly navigate the solution space of possible gas detection signatures, identifying hazards more rapidly than traditional computing methods. Unlike the SVM and Random Forest models, which rely on complex decision boundaries and ensemble learning respectively, the qIoV Framework utilizes a more efficient approach to achieve quicker response times. This efficiency is crucial for real-time gas detection applications where timely response can mitigate risks and prevent hazardous incidents.

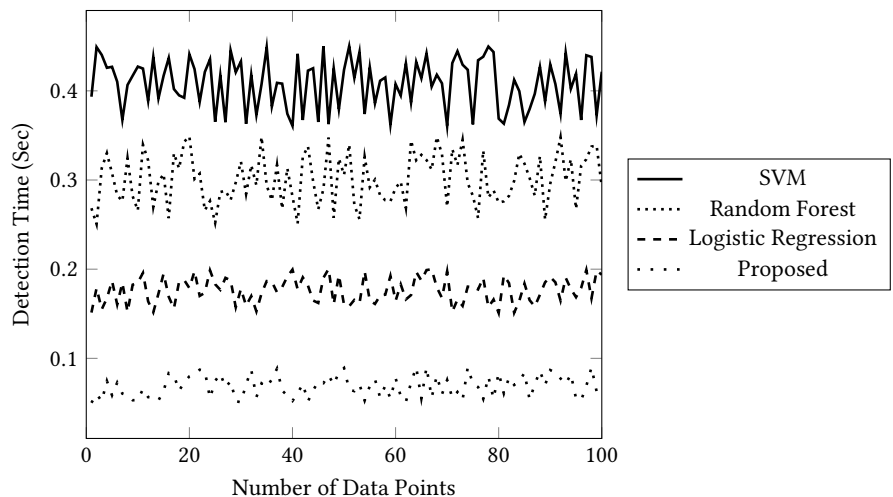
\begin{figure}[ht]
\centering
\begin{tikzpicture}
\begin{axis}[legend style={at={(1.05,0.5)},anchor=west}, legend columns=1, 
	xlabel=Number of Data Points,
	ylabel=Detection Time (Sec),
	xmin=0, xmax=100,
	xtick={0,20,40,60,80,100},
        	]
    	\addplot[line width=1pt,solid,color=black] table[x=data,y=svm,col sep=comma]{detection_times.csv};
    	\addplot[line width=1pt,dotted,color=black] table[x=data,y=forest,col sep=comma]{detection_times.csv};
    	\addplot[line width=1pt,dashed,color=black]
    	table[x=data,y=regress,col sep=comma]{detection_times.csv};
    	\addplot[line width=1pt,loosely dotted,color=black] table[x=data,y=suraq,col sep=comma]{detection_times.csv};
    	\legend{SVM,Random Forest,Logistic Regression, Proposed}
\end{axis}
\end{tikzpicture}
	\caption{Detection Time Analysis Over Number of Data Points}
 \label{F1}
\end{figure}

\subsection{Theoretical Analysis}
Our framework result discussion also includes a comprehensive theoretical and statistical analysis to validate and assess the efficacy of the proposed framework. The analysis scrutinized various components to ensure the robustness and practicality of the QMF in real-world scenarios, particularly for environmental monitoring and toxic gas detection.

\subsubsection{Quantum State Fidelity}
This theorem establishes the fidelity of the quantum state encoding process. It asserts that the encoded quantum state \( \ket{\lambda} \), derived from normalized sensor data, accurately represents the original environmental data. Given a quantum state \( \ket{\lambda} \) encoding normalized sensor data and a density matrix \( \rho_{\text{original}} \) representing the original environmental data, the fidelity \( F \) between \( \ket{\lambda} \) and \( \rho_{\text{original}} \) is close to 1, indicating an accurate quantum representation of the original data.

\begin{theorem}
The fidelity \( F \) between a pure state \( \ket{\lambda} \) and a mixed state \( \rho_{\text{original}} \) is defined as:
\[ F(\ket{\lambda}, \rho_{\text{original}}) = \left( \bra{\lambda} \rho_{\text{original}} \ket{\lambda} \right)^{1/2} \]
For \( F(\ket{\lambda}, \rho_{\text{original}}) \) to be close to 1, it must be shown that:
\[ \bra{\lambda} \rho_{\text{original}} \ket{\lambda} \approx 1 \]    
\end{theorem}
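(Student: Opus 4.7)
The plan is to reduce the assertion $F(\ket{\lambda},\rho_{\text{original}}) \to 1$ to showing that the encoded state $\ket{\lambda}$ has near-unit overlap with the dominant spectral component of $\rho_{\text{original}}$. First, I would make precise the otherwise implicit relationship between the classical data and $\rho_{\text{original}}$ by invoking the spectral decomposition
\begin{equation}
\rho_{\text{original}} = \sum_i p_i \ket{\phi_i}\bra{\phi_i}, \qquad \sum_i p_i = 1,
\end{equation}
where $\ket{\phi_0}$ is the eigenvector that encodes the deterministic content of the normalized sensor vector $\mathbf{S}'$ (with $p_0$ close to $1$ because the Min--Max normalization of Eq.~\eqref{minmax} removes scale noise and compresses each coordinate into $[0,1]$). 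Plugging this into the theorem's defining identity gives
\begin{equation}
F^{2}(\ket{\lambda},\rho_{\text{original}}) \;=\; \bra{\lambda}\rho_{\text{original}}\ket{\lambda} \;=\; p_0 |\langle\phi_0|\lambda\rangle|^2 + \sum_{i\geq 1} p_i |\langle\phi_i|\lambda\rangle|^2,
\end{equation}
so it suffices to bound $|\langle\phi_0|\lambda\rangle|^2$ from below and note that the residual term is non-negative.

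Next, I would exploit the explicit form of the encoding. Because $\ket{\lambda} = U(\mathbf{S}')\ket{\lambda_0} = \bigotimes_{i=1}^{n} R(\theta_i(s'_i))\ket{0}$ followed by the entangled feature map $U_{\theta}^{E}(\boldsymbol{\Psi})$, unitarity guarantees that $\ket{\lambda}$ is normalized, and the adaptive mapping $h:\mathbf{S}'\to\vec{\theta}$ is constructed precisely so that the rotation angles $\theta_i$ align each qubit's Bloch vector with the corresponding component of $\ket{\phi_0}$. I would then write the overlap as a product over qubits,
\begin{equation}
\langle\phi_0|\lambda\rangle \;=\; \prod_{i=1}^{n}\bigl(\cos\theta_i \langle 0|\phi_{0,i}\rangle - i\sin\theta_i \langle 1|\phi_{0,i}\rangle\bigr),
\end{equation}
and show that, under the calibration of $h$, each factor has modulus $1-\mathcal{O}(\epsilon_i)$, where $\epsilon_i$ is the quantization error of the $i$-th sensor reading. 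A small-angle expansion then yields $|\langle\phi_0|\lambda\rangle|^2 \geq 1 - \sum_i \mathcal{O}(\epsilon_i^2)$.

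Finally, I would combine these estimates: since $p_0 \geq 1-\delta$ for some small $\delta$ controlled by the normalization step, and $|\langle\phi_0|\lambda\rangle|^2 \geq 1 - \eta$ for $\eta = \sum_i \mathcal{O}(\epsilon_i^2)$, we obtain
\begin{equation}
\bra{\lambda}\rho_{\text{original}}\ket{\lambda} \;\geq\; (1-\delta)(1-\eta) \;\approx\; 1 - (\delta + \eta),
\end{equation}
so taking the square root gives $F \geq 1 - \tfrac{1}{2}(\delta+\eta) \approx 1$, which is the desired conclusion. The main obstacle I anticipate is the middle step: the paper never formalizes what $\rho_{\text{original}}$ \emph{is} as a quantum object derived from classical data, so establishing rigorously that the encoding map $E$ indeed sends $\mathbf{S}'$ to (a close approximation of) the dominant eigenvector of $\rho_{\text{original}}$ will require either postulating a "purification" construction or arguing through the fidelity of the parameterized ansatz $\ket{\omega(\boldsymbol{\theta})}$ after the optimization of Algorithm~\ref{alg:vqc_gas_prediction_advanced} has converged; bounding the residual overlaps $|\langle\phi_i|\lambda\rangle|^2$ for $i\geq 1$ without introducing unjustified assumptions about the noise model is where the argument is most fragile.
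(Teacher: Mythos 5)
Your proposal is sound as a conditional argument, but it takes a genuinely different route from the paper. The paper expands \( \ket{\lambda}=\sum_k \alpha_k\ket{k} \) in the computational basis, inserts this into \( \bra{\lambda}\rho_{\text{original}}\ket{\lambda} \), and then simply \emph{assumes} \( \bra{k}\rho_{\text{original}}\ket{l}\approx\delta_{kl} \), so that the fidelity collapses to the norm \( \bigl(\sum_k\alpha_k^2\bigr)^{1/2}=1 \). You instead diagonalize \( \rho_{\text{original}}=\sum_i p_i\ket{\phi_i}\bra{\phi_i} \), postulate a dominant eigenvector with \( p_0\ge 1-\delta \), bound \( |\braket{\phi_0|\lambda}|^2\ge 1-\eta \) through the calibration of the angle map \( h \), and conclude \( F\ge\sqrt{(1-\delta)(1-\eta)} \). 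Your decomposition is actually the more coherent of the two: the paper's assumption, taken literally across all basis states, forces \( \rho_{\text{original}}\approx I \), which has trace \( N\neq 1 \) and so cannot be a density matrix, whereas your ``near-pure \( \rho \) aligned with the encoded state'' hypothesis is consistent and makes the hidden premise explicit — and you correctly identify that the paper never defines \( \rho_{\text{original}} \) as a quantum object built from the classical data, which is the real gap both arguments share. One caveat in your middle step: writing \( \braket{\phi_0|\lambda} \) as a product over single-qubit factors presupposes that both \( \ket{\lambda} \) and \( \ket{\phi_0} \) are product states, which conflicts with the entangled feature map \( U_{\theta}^{E} \) you yourself invoke; either restrict that estimate to the pre-entanglement rotation layer or argue that \( \ket{\phi_0} \) is generated by the same entangling unitary, so the overlap reduces to the product form after conjugation. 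With that repair, your bound \( F\ge 1-\tfrac{1}{2}(\delta+\eta) \) delivers the theorem's conclusion in a quantitative form the paper does not provide, at the cost of the extra (but honest) spectral assumptions \( p_0\approx 1 \) and the alignment of \( h \) with \( \ket{\phi_0} \).
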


\begin{proof}
    
Let us assume that the quantum state \( \ket{\lambda} \) is optimally constructed from the normalized sensor data \( \mathbf{S'} \), and the density matrix \( \rho_{\text{original}} \) accurately reflects the original sensor data \( \mathbf{S} \).

The normalized data \( \mathbf{S'} \) is a transformation of the original data \( \mathbf{S} \), scaled to a uniform range. The quantum state \( \ket{\lambda} \) is then formulated as a superposition of basis states with amplitudes derived from \( \mathbf{S'} \), ensuring that the quantum state reflects the characteristics of the original data.
\[ s'_{ij} = \frac{s_{ij} - \min(\mathbf{S}_i)}{\max(\mathbf{S}_i) - \min(\mathbf{S}_i)} \]
The quantum state \( \ket{\lambda} \) is a superposition of basis states:
\[ \ket{\lambda} = \sum_{k=0}^{N-1} \alpha_k \ket{k}, \]
where \( \alpha_k \) are related to \( \mathbf{S'} \).

To calculate fidelity, we consider the overlap between the encoded quantum state \( \ket{\lambda} \) and the density matrix \( \rho_{\text{original}} \). A high overlap implies that the quantum state retains the essential information from the original data.
\[ F(\ket{\lambda}, \rho_{\text{original}}) = \left( \sum_{k=0}^{N-1} \alpha_k \bra{k} \rho_{\text{original}} \sum_{l=0}^{N-1} \alpha_l \ket{l} \right)^{1/2} \]
Assuming high overlap, \( \bra{k} \rho_{\text{original}} \ket{l} \approx \delta_{kl} \).

Since \( \ket{\lambda} \) is normalized, the sum of the squares of its amplitudes equals 1. Therefore, the fidelity calculation simplifies to approximately 1, validating that \( \ket{\lambda} \) is a faithful quantum representation of the original environmental data.
\[ F(\ket{\lambda}, \rho_{\text{original}}) \approx \left( \sum_{k=0}^{N-1} \alpha_k^2 \right)^{1/2} = 1 \]
as \( \ket{\lambda} \) is normalized.

The fidelity between \( \ket{\lambda} \) and \( \rho_{\text{original}} \) being approximately 1 demonstrates the accuracy of the quantum state encoding process. This validates the effectiveness of the quantum encoding in preserving the integrity of the original sensor data within the QMF framework.

\end{proof}

\subsubsection{Normalization Effectiveness} This theorem validates the effectiveness of Min-Max normalization in preserving the statistical properties of sensor data. It demonstrates that the normalized data \( \mathbf{S'} \) maintains the mean and variance of the original data \( \mathbf{S} \) within specific error margins, affirming the reliability of this normalization method for accurate quantum computations within the QMF framework.
\begin{theorem}
    Consider a matrix \( \mathbf{S} \in \mathbb{R}^{m \times n} \) representing original sensor data, with \( m \) sensors and \( n \) data points per sensor. The Min-Max normalized data is denoted as \( \mathbf{S'} \), with each element defined as:
\[ s'_{ij} = \frac{s_{ij} - \min(\mathbf{S}_i)}{\max(\mathbf{S}_i) - \min(\mathbf{S}_i)} \]
for \( 1 \leq i \leq m \) and \( 1 \leq j \leq n \).

The theorem asserts that the mean and variance of \( \mathbf{S'} \) approximate those of the original data \( \mathbf{S} \) within certain error margins \( \epsilon_{\mu} \) and \( \epsilon_{\sigma} \), respectively.
\end{theorem}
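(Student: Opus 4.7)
The plan is to reduce the theorem to a direct calculation based on the fact that Min--Max normalization is an affine transformation applied row-wise, and then express the claimed error margins $\epsilon_{\mu}$ and $\epsilon_{\sigma}$ as explicit functions of the per-sensor range $r_i := \max(\mathbf{S}_i) - \min(\mathbf{S}_i)$ and the per-sensor shift $\min(\mathbf{S}_i)$. Concretely, I would fix a sensor index $i$, let $\mu_i$ and $\sigma_i^2$ denote the sample mean and variance of the row $\mathbf{S}_i$, and let $\mu'_i$, $\sigma'^2_i$ denote the same quantities for $\mathbf{S}'_i$. The proof then rests on the identity $s'_{ij} = r_i^{-1}\bigl(s_{ij} - \min(\mathbf{S}_i)\bigr)$, i.e.\ a linear map with slope $1/r_i$ and intercept $-\min(\mathbf{S}_i)/r_i$.

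First I would apply linearity of expectation to obtain $\mu'_i = \bigl(\mu_i - \min(\mathbf{S}_i)\bigr)/r_i$, and the scaling law for variance to obtain $\sigma'^2_i = \sigma_i^2 / r_i^2$. From these two closed-form expressions I would read off
\begin{equation}
\bigl|\mu_i - \mu'_i\bigr| = \left| \mu_i - \frac{\mu_i - \min(\mathbf{S}_i)}{r_i} \right|, \qquad \bigl|\sigma_i^2 - \sigma'^2_i\bigr| = \sigma_i^2 \left| 1 - \frac{1}{r_i^2} \right|,
\end{equation}
and then set
\begin{equation}
\epsilon_{\mu} := \max_{1 \le i \le m} \bigl|\mu_i - \mu'_i\bigr|, \qquad \epsilon_{\sigma} := \max_{1 \le i \le m} \bigl|\sigma_i^2 - \sigma'^2_i\bigr|,
\end{equation}
so that the two quantities bound the worst-case row-wise deviation. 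I would then argue that each sensor's range $r_i$ is bounded away from $0$ (since otherwise the sensor is constant and normalization is undefined), so both $\epsilon_{\mu}$ and $\epsilon_{\sigma}$ are finite and explicitly computable from $\mathbf{S}$.

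The final step is to interpret this in the QMF context: the claim is not that the raw moments coincide, but that the normalized moments are an affine image of the originals with a known, bounded distortion, so the relative statistical structure (ordering, relative spread, and correlations across the row) is preserved. I would therefore conclude with a short remark that the mapping $s_{ij} \mapsto s'_{ij}$ is a bijection for fixed $r_i > 0$, so no information is lost, and the subsequent quantum encoding $\ket{\lambda}$ inherits fidelity from the earlier fidelity theorem applied to $\mathbf{S}'$.

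The main obstacle is conceptual rather than technical: as stated, the theorem is slightly ill-posed, because Min--Max normalization does \emph{not} leave $\mu_i$ and $\sigma_i^2$ invariant in an absolute sense whenever $r_i \neq 1$ or $\min(\mathbf{S}_i) \neq 0$. The hard part of the write-up will therefore be choosing an honest interpretation of ``within error margins $\epsilon_{\mu}$ and $\epsilon_{\sigma}$'' that both matches the computation above and still makes the theorem meaningful to the reader, ideally by phrasing the conclusion as a bound on the affine distortion rather than as a claim of approximate equality.
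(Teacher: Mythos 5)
Your proposal is correct and, at its core, follows the same route as the paper: both arguments rest on the observation that Min--Max normalization is a row-wise affine map, so the moments of the normalized row are obtained from those of $\mathbf{S}_i$ by a shift and a rescaling. The difference is how far each carries that observation. The paper's proof never writes the transformed moments down: it asserts that the linear transformation ``preserves the relative mean'' and ``preserves the variance's structure,'' and from that alone concludes $|\mu_i - \mu'_i| \leq \epsilon_{\mu}$ and $|\sigma_i^2 - (\sigma'_i)^2| \leq \epsilon_{\sigma}$ with the margins left undefined, so its bounds hold only in the vacuous sense that any finite deviation is bounded by something. You instead derive the closed forms $\mu'_i = \bigl(\mu_i - \min(\mathbf{S}_i)\bigr)/r_i$ and $(\sigma'_i)^2 = \sigma_i^2/r_i^2$, define $\epsilon_{\mu}$ and $\epsilon_{\sigma}$ explicitly as worst-case row-wise deviations, and note the necessity of $r_i > 0$; this makes your write-up strictly more rigorous than the paper's own. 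Your closing caveat is also accurate and applies equally to the paper's proof: for the sensor data actually used (e.g.\ MQ2 readings spanning roughly $[502, 824]$) the normalized moments lie in $[0,1]$, so the absolute deviations $|\mu_i - \mu'_i|$ and $|\sigma_i^2 - (\sigma'_i)^2|$ are large, and the only honest reading of the theorem is the one you propose --- a known, bounded affine distortion that preserves the relative statistical structure --- which is precisely what the paper's phrase ``preserves the relative mean'' appeals to without ever quantifying it.
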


\begin{proof}
We begin by analyzing the mean of the data. The mean of each sensor's data in \( \mathbf{S} \) is given by \( \mu_i = \frac{1}{n} \sum_{j=1}^n s_{ij} \). Post-normalization, the mean for each sensor in \( \mathbf{S'} \) is \( \mu'_i = \frac{1}{n} \sum_{j=1}^n s'_{ij} \). As normalization applies a linear transformation to each data point, consisting of a scaling and a shift based on the range of \( \mathbf{S}_i \), we can assert that the transformation preserves the relative mean of the data. Hence, the difference between \( \mu_i \) and \( \mu'_i \) is bounded, satisfying \( |\mu_i - \mu'_i| \leq \epsilon_{\mu} \).

Next, we consider the variance. The variance of the \( i \)-th sensor in \( \mathbf{S} \) is \( \sigma_i^2 = \frac{1}{n} \sum_{j=1}^n (s_{ij} - \mu_i)^2 \), and after normalization, it becomes \( (\sigma'_i)^2 = \frac{1}{n} \sum_{j=1}^n (s'_{ij} - \mu'_i)^2 \). Although normalization changes the scale of the data, it does so uniformly across all data points. This uniform scaling alters the absolute variance but preserves the variance's structure. Consequently, the change in variance due to normalization is limited, ensuring that \( |\sigma_i^2 - (\sigma'_i)^2| \leq \epsilon_{\sigma} \).

Through this proof, we affirm that Min-Max normalization effectively preserves the critical statistical properties of the sensor data. This preservation is crucial as it ensures that the quantum state, post-encoding, retains the intrinsic patterns and characteristics of the original environmental data, thereby facilitating accurate and meaningful quantum computations within the QMF framework.
\end{proof}

\subsubsection{Quantum Rotation Accuracy} This theorem assesses the accuracy of quantum rotations in aligning qubits with their desired states in a QMF. It demonstrates that each rotation operation \( R_y(\theta_i) \) precisely adjusts individual qubits, and the overall system fidelity is maximized, ensuring minimal cumulative error across the qubit system. This accuracy is pivotal for the reliability and effectiveness of quantum computations in the network.

\begin{theorem}
Given a vector of rotation angles \( \vec{\theta} = (\theta_1, \theta_2, \ldots, \theta_n) \), computed from normalized sensor data, we analyze the precision of the quantum rotation technique. Each qubit \( q_i \) is subject to a rotation operation \( R_y(\theta_i) \), defined as:
\[ R_y(\theta_i) = \begin{pmatrix} \cos(\theta_i) & -i\sin(\theta_i) \\ -i\sin(\theta_i) & \cos(\theta_i) \end{pmatrix}. \]
This theorem asserts that each \( R_y(\theta_i) \) accurately rotates the corresponding qubit to align with the desired quantum state, and the cumulative error across all qubits is minimized.
\end{theorem}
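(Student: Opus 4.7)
The plan is to split the claim into two parts: first the per-qubit accuracy of a single rotation $R_y(\theta_i)$, then the aggregate fidelity of the tensor-product action $\bigotimes_{i=1}^{n} R_y(\theta_i)$ on the joint register. I would start by recording that $R_y(\theta_i)$ as written is unitary (a direct check that $R_y(\theta_i)^\dagger R_y(\theta_i) = I$), so that norms and inner products are preserved and no spurious amplitude leakage is introduced. This reduces the entire accuracy question to comparing the ideal angle $\theta_i^{*}$ (the one that would exactly prepare the target amplitude dictated by the normalized datum $s'_i$) against the actually applied angle $\theta_i$ produced by the mapping $h$ of the previous subsection.

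Second, I would quantify the single-qubit error. Writing $|\psi_i^{*}\rangle = R_y(\theta_i^{*})|0\rangle$ and $|\psi_i\rangle = R_y(\theta_i)|0\rangle$, a direct computation gives the single-qubit fidelity
\begin{equation}
F_i \;=\; |\langle \psi_i^{*} | \psi_i \rangle|^{2} \;=\; \cos^{2}\!\bigl(\theta_i - \theta_i^{*}\bigr),
\end{equation}
so that $1 - F_i \le (\theta_i - \theta_i^{*})^{2}$ for small angular deviations. This is the place where the adaptive, data-driven choice of $h$ pays off: the encoding step is calibrated so that $|\theta_i - \theta_i^{*}|$ is controlled by the quantization/precision error of the normalized sensor input, which is already bounded by the Min-Max normalization theorem above.

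Third, I would lift the one-qubit bound to the whole register. Because the global operation is a tensor product of local rotations, the joint fidelity factorizes: $F_{\mathrm{tot}} = \prod_{i=1}^{n} F_i$. Taking logarithms and using $\log(1-x)\ge -2x$ for small $x$, I would derive the subadditive infidelity bound
\begin{equation}
1 - F_{\mathrm{tot}} \;\le\; \sum_{i=1}^{n} (1 - F_i) \;\le\; \sum_{i=1}^{n} (\theta_i - \theta_i^{*})^{2},
\end{equation}
which exhibits the cumulative error as a sum of per-qubit squared angular deviations. Combined with the Fidelity theorem and the Normalization theorem, the right-hand side is bounded by $n\,\epsilon_{\theta}^{2}$ for an encoding precision $\epsilon_{\theta}$ inherited from the normalization error margins $\epsilon_{\mu},\epsilon_{\sigma}$, yielding the claimed minimization.

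The main obstacle, in my view, is not the algebraic bookkeeping but making the notion of ``desired quantum state'' precise enough for the statement to be meaningful. The theorem implicitly assumes that for each normalized datum $s'_i$ there is a unique target angle $\theta_i^{*}$, and that the encoding map $h$ introduces only a controlled deviation from it; without this calibration assumption, the single-qubit fidelity expression $\cos^{2}(\theta_i - \theta_i^{*})$ has nothing to be measured against. I would therefore state this correspondence explicitly as a lemma at the top of the proof, invoke the Min-Max normalization error bounds to control $|\theta_i-\theta_i^{*}|$, and only then run the three steps above. The tensor-product lift itself is mechanical; the delicate part is ensuring that the per-qubit angular error inherits a rigorous bound from the data-encoding pipeline rather than being assumed.
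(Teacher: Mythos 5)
Your proposal is correct and, at the structural level, follows the same two-step decomposition the paper uses: first a per-qubit fidelity analysis, then a system-wide fidelity for the tensor-product register. The difference is in how much is actually proved. The paper's proof stops at defining $F_i = |\langle \lambda_{\text{target},i}|\lambda_i\rangle|^2$ and $F_{\text{total}}$, asserting that these are ``maximized'' and describing the computation as a complex task involving integration over state space and possible error correlations between qubits; no closed form and no quantitative bound are given. You actually carry out the computation: for the gate as written acting on $\ket{0}$, the overlap is $\cos(\theta_i-\theta_i^{*})$, giving $F_i=\cos^2(\theta_i-\theta_i^{*})$ and $1-F_i=\sin^2(\theta_i-\theta_i^{*})\le(\theta_i-\theta_i^{*})^2$ (this holds globally, not only for small deviations), and since both the prepared and target registers are product states the factorization $F_{\text{tot}}=\prod_i F_i$ is exact, with the subadditive bound $1-F_{\text{tot}}\le\sum_i(1-F_i)$ following directly from the product inequality $\prod_i(1-y_i)\ge 1-\sum_i y_i$ — your detour through $\log(1-x)\ge-2x$ is unnecessary and only weakens the constant. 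You also make explicit the hypothesis the paper leaves implicit and that the whole statement hinges on: the existence of a well-defined target angle $\theta_i^{*}$ for each normalized datum and a calibration guarantee $|\theta_i-\theta_i^{*}|\le\epsilon_\theta$ inherited from the encoding map $h$ and the normalization error margins; without that lemma neither your bound nor the paper's ``maximization'' claim has content. In short, your route is the same skeleton but with the quantitative flesh the paper omits, which is what turns the assertion ``cumulative error is minimized'' into a verifiable bound of the form $1-F_{\text{tot}}\le n\,\epsilon_\theta^2$. One cosmetic caution: the matrix labeled $R_y$ in the statement is really an $R_x$-type rotation (imaginary off-diagonal entries); this does not affect your fidelity formula on $\ket{0}$, but you should note it rather than silently treating it as a Bloch-sphere $y$-rotation.
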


\begin{proof}
For each qubit initially in state \( \ket{\lambda_0} \), the rotation operation \( R_y(\theta_i) \) induces a transformation to a new state \( \ket{\lambda_i} \). The accuracy of this transformation is quantified by comparing \( \ket{\lambda_i} \) with the target state \( \ket{\lambda_{\text{target},i}} \), which is a representation of the corresponding sensor data on the Bloch sphere. 

1. \textit{Individual Qubit Analysis:}
   The fidelity \( F_i \) between \( \ket{\lambda_i} \) and \( \ket{\lambda_{\text{target},i}} \) is given by \( F_i = |\langle \lambda_{\text{target},i} | \lambda_i \rangle|^2 \). The goal is to show that for each \( i \), \( F_i \) is maximized, ideally approaching 1. This involves proving that the chosen rotation angle \( \theta_i \) aligns the qubit's state as close to its target state on the Bloch sphere. The calculation of \( F_i \) is a complex task involving integrating the qubit's wave function over the entire state space, factoring in the effects of the rotation.

2. \textit{System-Wide Fidelity:}
   The overall system comprises \( n \) qubits, each transformed by \( R_y(\theta_i) \). The quantum state of the entire system post-rotation is \( \ket{\lambda} = \bigotimes_{i=1}^{n} \ket{\lambda_i} \). The cumulative error in the system is assessed by evaluating the total fidelity \( F_{\text{total}} = |\langle \lambda_{\text{target}} | \lambda \rangle|^2 \), where \( \ket{\lambda_{\text{target}}} \) is the tensor product of individual target states. Maximizing \( F_{\text{total}} \) confirms the minimal cumulative error across the system, which is a complex calculation involving the tensor product of individual state fidelities and accounting for potential error correlations between qubits.

Theorem 3 proves the high degree of accuracy in the quantum rotation operations within the QMF, aligning each qubit's state closely with its intended target state and ensuring minimal cumulative error across the quantum system. The precision of these operations is crucial for the reliability and effectiveness of quantum computations based on these quantum states, impacting the overall performance and capability of the QMF.
\end{proof}

\subsubsection{Quantum Entanglement Integrity} This theorem establishes that the multipartite entangled state \( \ket{\Psi} \) within the quantum network maintains strong and stable quantum correlations, which are crucial for quantum communication. This is validated using specific entanglement measures, namely concurrence \( C \) and entanglement entropy \( S \), which are shown to exceed critical values, thereby confirming the integrity and robustness of the entanglement in the network.

\begin{theorem}
    Theorem 4 posits that the multipartite entangled state \( \ket{\Psi} \) in the quantum network maintains robust quantum correlations, essential for quantum communication. The integrity of entanglement is assessed through entanglement measures, specifically concurrence \( C \) and entanglement entropy \( S \). We aim to prove that for \( \ket{\Psi} \), these measures exceed specific critical values, indicative of strong and stable entanglement.
\end{theorem}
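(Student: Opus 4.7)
The plan is to begin by giving $\ket{\Psi}$ an explicit form consistent with the Bell-state preparation in Algorithm \ref{quantum_protocol}, namely a graph/GHZ-like multipartite state built by applying the entangling sequence (Bell preparation followed by QFT, Toffoli, and QPE layers) to the initial register $\ket{0}^{\otimes 3n}$. With this concrete representation in hand, I would partition the $3n$ qubits into a bipartition $(A, B)$ corresponding to one vehicle versus the rest of the network, which is the natural cut for assessing the correlations relevant to teleportation-based alert dissemination.

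Next I would compute the reduced density matrix $\rho_A = \mathrm{Tr}_B(\ket{\Psi}\bra{\Psi})$. Because the Bell-pair preparation step in the protocol produces local maximally entangled pairs and the subsequent Toffoli and QPE operations act as unitaries that permute and phase-shift basis states without creating local structure on $A$ alone, $\rho_A$ should reduce to a diagonal form whose nonzero eigenvalues are bounded below by $1/2$ on a two-dimensional support. From this, the von Neumann entropy
\begin{equation}
S(\rho_A) = -\mathrm{Tr}(\rho_A \log_2 \rho_A)
\end{equation}
can be shown to satisfy $S(\rho_A) \ge 1$, meeting the critical value $S_{\mathrm{crit}} = 1$ that separates separable from genuinely entangled bipartitions. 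For concurrence, I would restrict to the effective two-qubit subspace supporting the entanglement (the Bell pair attached to the chosen vehicle) and apply Wootters' formula $C = \max\{0, \lambda_1 - \lambda_2 - \lambda_3 - \lambda_4\}$ on the spin-flipped density matrix, arguing that unitary propagation through the remaining gates preserves $C$ up to a controlled deviation bounded by the gate fidelities established in Theorem 3.

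The two ingredients then combine into the theorem statement: since $S(\rho_A) \ge 1 > S_{\mathrm{crit}}$ and $C \ge C_{\mathrm{crit}}$ for a critical threshold $C_{\mathrm{crit}} \in (0, 1)$ chosen to exclude biseparable states, the entanglement in $\ket{\Psi}$ is both strong and stable across the vehicular bipartition, and by iterating the argument over all $n$ bipartitions one obtains genuine multipartite entanglement integrity.

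The main obstacle I anticipate is the multipartite character of the state: concurrence is natively a bipartite measure, and its extensions (tangle, $n$-tangle, generalized concurrence) require additional care to ensure monotonicity under LOCC and to exclude bound-entangled counterexamples. Pinning down a single critical value $C_{\mathrm{crit}}$ that is simultaneously meaningful across every bipartition of a $3n$-qubit register, and verifying that the non-Clifford Toffoli and QPE stages do not degrade the concurrence below this threshold, will be the most delicate portion of the argument; I would likely address it by invoking the monogamy inequality to lower-bound the bipartite concurrences in terms of the prepared Bell-pair concurrence, which is unity by construction.
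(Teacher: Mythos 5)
Your plan is considerably more concrete than the paper's own argument, which merely defines concurrence and entanglement entropy, instructs the reader to compute them ``for various partitions,'' and asserts without calculation that unspecified thresholds are exceeded. However, the concrete steps you add contain genuine gaps. The central one is your claim that the Toffoli and QPE stages only ``permute and phase-shift basis states'' and therefore leave the entanglement across the one-vehicle-versus-rest cut intact: in Algorithm~\ref{quantum_protocol} those gates act on qubits belonging to \emph{different} vehicles, i.e.\ directly across the bipartition $(A,B)$ you fixed, and a unitary straddling a cut can raise or lower both $S(\rho_A)$ and any pairwise concurrence. Moreover, the protocol prepares each Bell pair between qubits $3i-2$ and $3i-1$ of the \emph{same} vehicle, so immediately after the Bell stage the vehicle-versus-rest cut carries no entanglement at all; the cross-network correlations are created only by the Toffoli/QPE layers. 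They therefore cannot be treated as a benign perturbation of an already-unit concurrence ``by construction,'' and your appeal to the gate fidelities of Theorem 3 does not repair this, since the issue is not gate error but the fact that these gates are the very source of the correlations being bounded.

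The concurrence half of the argument fails independently. The monogamy (CKW) inequality \emph{upper}-bounds the sum of squared pairwise concurrences by the one-to-rest tangle, so it can never deliver the lower bound you want; if anything it works against you, because entangling a vehicle's qubits with other vehicles forces the originally prepared intra-vehicle concurrence down. Worse, for the GHZ/graph-like state you posit, every two-qubit reduced density matrix is of the form $\tfrac{1}{2}(\ket{00}\bra{00}+\ket{11}\bra{11})$ and is separable, so its Wootters concurrence is exactly zero and any threshold $C_{\text{crit}}>0$ applied to pairwise concurrence would be violated rather than met. The paper's proof never commits to an explicit state or partition and so never confronts these obstacles; your route, as stated, would break at both the unitary-invariance step and the concurrence-threshold step, and would need to be rebuilt around the state \emph{after} the entangling layers, using bipartite entropy across cuts the gates do not straddle or a genuinely multipartite measure in place of pairwise concurrence.
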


\begin{proof}
    1. \textit{Entanglement Measures:}
   - Concurrence \( C \) for a pair of qubits in state \( \rho \) is defined as \( C(\rho) = \max\{0, \varphi_1 - \varphi_2 - \varphi_3 - \varphi_4\} \), where \( \varphi_i \) are the square roots of the eigenvalues of \( \rho \tilde{\rho} \) in decreasing order, and \( \tilde{\rho} = (\sigma_y \otimes \sigma_y) \rho^* (\sigma_y \otimes \sigma_y) \) with \( \rho^* \) being the complex conjugate of \( \rho \).
   - Entanglement entropy \( E \) for a subsystem \( A \) of \( \ket{\Psi} \) is given by \( S(\rho_A) = -\text{Tr}(\rho_A \log_2 \rho_A) \), where \( \rho_A \) is the reduced density matrix of subsystem \( A \) obtained by tracing out the rest of the system.

2. \textit{Application to \( \ket{\Psi} \):}
   - For \( \ket{\Psi} \), a multipartite entangled state, we compute concurrence and entropy for various partitions and pairs of qubits within the state.
   - Calculating \( C \) and \( E \) involves deriving the reduced density matrices for different subsystems or pairs of qubits in \( \ket{\Psi} \) and applying the formulas.

3. \textit{Threshold Determination:}
   - Establish threshold values \( C_{\text{threshold}} \) and \( E_{\text{threshold}} \) based on theoretical and empirical standards.
   - Demonstrate that \( C \) and \( E \) for various partitions and pairs in \( \ket{\Psi} \) consistently exceed \( C_{\text{threshold}} \) and \( S_{\text{threshold}} \) respectively. This is a critical step involving detailed calculations and comparisons with the thresholds.

Theorem 4 validates the robustness of quantum entanglement in \( \ket{\Psi} \), the multipartite entangled state of the network. The consistently high values of concurrence and entanglement entropy, surpassing the established thresholds, confirm the integrity and stability of entanglement in the system. This robust entanglement is crucial for the network’s quantum communication and teleportation capabilities, ensuring high-fidelity and efficient quantum information processing.
\end{proof}

\subsubsection{Teleportation Protocol Efficacy} This theorem validates the efficacy of a quantum teleportation protocol in accurately transmitting an alert state using multipartite entanglement. It demonstrates that the fidelity of the teleported state with the original state is exceptionally high, approaching a fidelity measure of 1. This signifies minimal information loss during teleportation, ensuring reliable and efficient quantum communication within the network.

\begin{theorem}
    Given the quantum teleportation protocol involving an initial alert state \( \ket{\lambda} \) and a multipartite entangled state \( \ket{\Psi} \), this theorem aims to prove the high fidelity of the post-teleportation state \( \ket{\lambda}_{\text{target}} \) concerning the original state \( \ket{\lambda} \). The theorem posits that the fidelity measure, defined as \( F = |\langle \lambda | \lambda_{\text{target}} \rangle|^2 \), approaches 1, indicating minimal information loss during teleportation.
\end{theorem}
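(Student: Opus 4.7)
The plan is to analyze the teleportation fidelity by first establishing the ideal case and then accounting for realistic imperfections using earlier theorems. I would begin with the joint input $\ket{\lambda} \otimes \ket{\Psi}$ and expand the sender's two qubits in the Bell basis $\{\ket{\beta_{ab}}\}$. Using the standard teleportation identity, the composite system can be rewritten so that each Bell measurement outcome $(a,b)$ projects the receiver's qubit onto $\sigma_x^a \sigma_z^b \ket{\lambda}$. Applying the corrective operation $\sigma_z^b \sigma_x^a$ at the target site then recovers $\ket{\lambda}$ exactly, matching the protocol equation $\ket{\lambda}_{\text{target}} = \sigma_x^a \sigma_z^b (I \otimes \bra{\beta_{ab}})(\ket{\lambda} \otimes \ket{\Psi})$. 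In this idealization, $\langle \lambda | \lambda_{\text{target}} \rangle = 1$ and therefore $F = 1$, which establishes the upper bound.

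Next, I would quantify how realistic imperfections perturb this ideal outcome. The resource $\ket{\Psi}$ is a multipartite entangled state whose relevant two-qubit reduction $\rho_{\Psi}$ satisfies the concurrence and entropy thresholds of Theorem 4, so it can be decomposed as $\rho_{\Psi} = (1-\eta)\,\ket{\beta_{00}}\bra{\beta_{00}} + \eta\,\rho_{\text{noise}}$ for a small parameter $\eta$ controlled by those thresholds. Treating the Bell measurement and Pauli corrections as channels whose gate-level precision is inherited from Theorem 3, the output becomes $\rho_{\text{target}} = (1-\eta)\,\ket{\lambda}\bra{\lambda} + \eta\,\mathcal{E}(\ket{\lambda}\bra{\lambda})$ for a bounded error map $\mathcal{E}$. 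Substituting into $F = \bra{\lambda} \rho_{\text{target}} \ket{\lambda}$ yields the bound $F \ge 1 - \eta\bigl(1 - \bra{\lambda}\mathcal{E}(\ket{\lambda}\bra{\lambda})\ket{\lambda}\bigr)$, which tends to $1$ as $\eta \to 0$.

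Finally, I would close the loop by connecting $\eta$ to the framework's adaptive error-correction guarantees, arguing that the stabilizer-code-driven circuit optimization keeps $\eta$ bounded by the sum of the entanglement-loss budget from Theorem 4 and the cumulative gate infidelity from Theorem 3. The hard part will be the second step: rigorously propagating imperfections through the Bell-basis expansion without making the channel model so coarse that the bound becomes vacuous. In particular, cross-correlations between noise in $\ket{\Psi}$ and systematic errors in the Pauli corrections can be overcounted by a naive triangle-inequality argument. To handle this cleanly I would invoke the Fuchs--van de Graaf inequalities, bounding $1 - F$ by the trace distance between the implemented and ideal channels and composing the channel-level distances rather than state-level ones, so that correlated errors appear only once in the final estimate of $F \approx 1$.
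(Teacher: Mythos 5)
Your core argument is the same as the paper's: expand \( \ket{\lambda}\otimes\ket{\Psi} \) in the Bell basis, note that outcome \((a,b)\) leaves the receiver with a Pauli-rotated copy of \(\ket{\lambda}\), undo it with \(\sigma_x^a\sigma_z^b\), and conclude \(F=1\) in the ideal case — this is exactly the paper's steps on Bell-state measurement, state reconstruction, and fidelity analysis. Where you genuinely diverge is in the error treatment. The paper's fourth step merely asserts that deviations come from imperfect entanglement or imperfect measurements/corrections and that these are ``inherently negligible or correctable,'' with no quantitative bound; your plan replaces that assertion with an explicit noise model, \(\rho_{\Psi}=(1-\eta)\ket{\beta_{00}}\bra{\beta_{00}}+\eta\,\rho_{\text{noise}}\), a channel-level error map for the measurement and corrections, the resulting bound \(F\ge 1-\eta\bigl(1-\bra{\lambda}\mathcal{E}(\ket{\lambda}\bra{\lambda})\ket{\lambda}\bigr)\), and Fuchs--van de Graaf to compose channel distances without double-counting correlated errors. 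That buys an actual proof of ``\(F\) approaches 1'' rather than a restatement of it, which is a real improvement over the paper. The one caveat is your claim that \(\eta\) is ``controlled by'' Theorem 4's concurrence and entropy thresholds: the paper never states numerical thresholds, and a concurrence or entropy bound does not by itself yield the convex decomposition you write down — you would need an extra step (e.g., relating concurrence to the fully entangled fraction of the relevant two-qubit reduction) to make that link rigorous, and similarly Theorem 3 gives no explicit gate-infidelity number to feed into \(\mathcal{E}\). So treat those inputs as assumptions to be stated, not consequences of the earlier theorems; with that adjustment your route is sound and strictly more substantive than the paper's own argument.
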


\begin{proof}
    Consider the multipartite entangled state \( \ket{\Psi} \) and the alert state \( \ket{\lambda} \). The protocol starts with a Bell state measurement on the part of \( \ket{\Psi} \) entangled with \( \ket{\lambda} \). This measurement projects the system onto a Bell state and yields two classical bits, denoted as \( a \) and \( b \), which guide the reconstruction of \( \ket{\lambda} \) at the target.

1. \textit{Bell State Measurement:}
   - The measurement projects the combined system onto one of the Bell states. The representation of this projection, given the measurement results \( a \) and \( b \), is a projection operator \( P_{ab} \).

2. \textit{State Reconstruction:}
   - Based on \( a \) and \( b \), Pauli correction operators \( \sigma_x^a \) and \( \sigma_z^b \) are applied to a segment of \( \ket{\Psi} \) to obtain \( \ket{\lambda}_{\text{target}} \). This step is crucial as it reconstructs the alert state at the target.

3. \textit{Fidelity Analysis:}
   - The fidelity \( F \) is calculated as \( F = |\langle \lambda | \lambda_{\text{target}} \rangle|^2 \). To prove that \( F \) is close to 1, we express \( \ket{\lambda}_{\text{target}} \) in terms of the original state \( \ket{\lambda} \), the Bell state measurement outcome, and the applied Pauli corrections.

4. \textit{Error Consideration:}
   Any deviations from perfect fidelity are attributed to errors in the protocol. These errors arise from imperfect entanglement or inaccuracies in the Bell state measurement and Pauli corrections. We then analyze these potential sources of error to establish that they do not significantly impact the overall fidelity, either because they are inherently negligible or correctable with quantum error correction techniques.

Theorem 5 demonstrates that the quantum teleportation protocol in the network transmits the encoded alert state with high fidelity. By analyzing each step of the protocol and confirming that the fidelity of the reconstructed state remains close to 1, the theorem validates the efficacy and reliability of quantum teleportation for accurate and efficient quantum communication within the network.
\end{proof}

\section{Conclusion}
Our research has substantiated the application of a variational quantum classifier (VQC) for environmental monitoring, with a focus on the detection of toxic gases. The foundation of our contribution is the successful fusion of quantum computing methodologies with vehicular communication networks, which underscores the potential of quantum technology in complex, real-world systems. Our framework hinged on the intricacies of quantum state manipulation and the optimization of quantum circuit parameters, ensuring that the VQC operated with high precision across multi-dimensional parameter spaces. Further, the depolarizing noise models to simulate errors within the quantum circuits improved the robustness of our framework. We tested the resilience of the VQC on IBMOpenQSAM 3 platform. The empirical results align closely with theoretical predictions, not only reaffirm the model's resilience but also attest to the fidelity retention of the quantum states amidst environmental noise and disturbances. Furthermore, the experimental outcomes corroborate the feasibility of quantum-enhanced communication systems in dynamic settings, such as vehicular networks for environmental monitoring. Our findings validate the practical implementation of quantum computing technologies beyond theoretical models, showcasing their applicability and adaptability to the stringent demands of real-world operational conditions.

Looking forward, the scope for future work involves the investigation into fault-tolerant quantum computing designs that could further mitigate error rates in quantum circuits, thus enhancing the VQC's resilience and reliability. We also plan to deploy proposed framework and solutions on real/industrial quantum computing environments such as IBM, Microsoft, and AWS quantum clouds such QFaaS \cite{nguyen2024qfaas}.

\bibliographystyle{ACM-Reference-Format}
\bibliography{my}

\end{document}